\newcolumntype{C}[1]{>{\centering\let\newline\\\arraybackslash}m{#1}}
\newtheorem{theorem}{Theorem}
\newtheorem{lemma}{Lemma}
\newtheorem{proposition}[theorem]{Proposition}
\theoremstyle{definition}
\newtheorem{definition}{Definition}
\newtheorem{example}{Example}
\newtheorem{assumption}{Assumption}
\theoremstyle{remark}
\newtheorem{remark}{Remark}
\theoremstyle{problem}
\newtheorem{problem}{Problem}
\newcommand{\T}{\mathcal{T}} 
\newcommand{\A}{\mathcal{A}} 
\newcommand{\init}{\mathit{init}}
\newcommand{\Nat}{\mathbb{N}} 
\newcommand{\AP}{\mathit{AP}} 
\begin{document}

\title{On the Timed Temporal Logic Planning of Coupled Multi-Agent Systems}

\author{
	\name{Alexandros Nikou\textsuperscript{a}, Dimitris Boskos\textsuperscript{a}, Jana Tumova\textsuperscript{b} and Dimos V. Dimarogonas\textsuperscript{a}\thanks{CONTACT Alexandros Nikou. Email: anikou@kth.se, Dimitris Boskos. Email: boskos@kth.se, Jana Tumova. Email: tumova@kth.se and Dimos V. Dimarogonas. Email: dimos@kth.se}}
	\affil{\textsuperscript{a}The authors are with KTH Center of Autonomous Systems and ACCESS Linnaeus Center, School of Electrical
		Engineering, KTH Royal Institute of Technology, SE-100 44, Stockholm, Sweden. \\
		\textsuperscript{b}The author is with School of Computer Science and Communication, KTH Royal Institute of Technology, SE-100 44, Stockholm, Sweden.}
}

\thanks{This work was supported by the H2020 ERC Starting Grant BUCOPHSYS, the EU H2020 Research and
		Innovation Programme under GA No. 731869 (Co4Robots), the SSF COIN project, the Swedish Research Council (VR) and the Knut och Alice Wallenberg Foundation.}

\maketitle

\begin{abstract}
This paper presents a fully automated procedure for controller synthesis for multi-agent systems under coupling constraints. Each agent is modeled with dynamics consisting of two terms: the first one models the coupling constraints and the other one is an additional bounded control input. We aim to design these inputs so that each agent meets an individual high-level specification given as a Metric Interval Temporal Logic (MITL). First, a decentralized abstraction that provides a space and time discretization of the multi-agent system is designed. Second, by utilizing this abstraction and techniques from formal verification, we propose an algorithm that computes the individual runs which provably satisfy the high-level tasks. The overall approach is demonstrated in a simulation example conducted in MATLAB environment.
\end{abstract}

\begin{keywords}
Multi-Agent Systems, Cooperative Control, Hybrid Systems, Formal Verification, Timed Logics, Abstractions, Discrete Event Systems.           
\end{keywords}

\section{Introduction}

Cooperative control of multi-agent systems has traditionally focused on designing distributed control laws in order to achieve global tasks such as consensus and formation control, and at the same time fulfill properties such as network connectivity and collision avoidance. Over the last few years, the field of control of multi-agent systems under high-level specifications has been gaining attention. In this work, we aim to additionally introduce specific time bounds into these tasks, in order to include specifications such as: ``Robot 1 and robot 2 should visit region $A$ and $B$ within 4 time units respectively or ``Both robots 1 and 2 should periodically survey regions $A_1$, $A_2$, $A_3$, avoid region $X$ and always keep the longest time between two consecutive visits to $A_1$ below 8 time units".

The qualitative specification language that has primarily been used to express the high-level tasks is Linear Temporal Logic (LTL) (see, e.g., \cite{muray_2010_receding}). There is a rich body of literature containing algorithms for verification and synthesis of multi-agent systems under high  level specifications (\cite{guo_2015_reconfiguration, zavlanos_2016_multi-agent_LTL, pappas_2016_implan}). 
%
%
Controller synthesis under timed specifications has been considered in \cite{liu_MTL, murray_2015_stl, topcu_2015, baras_MTL_2016_new}. In \cite{liu_MTL}, the authors addressed the problem of designing high-level planners to achieve tasks for switching dynamical systems under Metric Temporal Logic (MTL) specification and in \cite{murray_2015_stl}, the authors utilized a counterexample-guided synthesis for cyber-physical systems subject to Signal Temporal Logic (STL) specifications. In \cite{topcu_2015}, an optimal control problem for continuous-time stochastic systems subject to objectives specified in MITL was studied. In \cite{baras_MTL_2016_new}, the authors focused on motion planning based on the construction of an efficient timed automaton from a given MITL specification. However, all these works are restricted to single agent planning and are not extendable to multi-agent systems in a straightforward way. The high-level coordination of multiple vehicles under timed specifications has been considered in \cite{frazzoli_MTL}, by solving an optimization problem over the tasks' execution time instances. %

An automata-based solution for multi-agent systems was proposed in our previous work \cite{alex_2016_acc}, where Metric Interval Temporal Logic (MITL) formulas were introduced in order to synthesize controllers such that every agent fulfills an individual specification and the team of agents fulfills a global specification. Specifically, the abstraction of each agent's dynamics was considered to be given and an upper bound of the time that each agent needs to perform a transition from one region to another was assumed. Furthermore, potential coupled constraints between the agents were not taken into consideration. Motivated by this, in this work, we aim to address the aforementioned issues. We assume that the dynamics of each agent consists of two parts: the first part is a consensus type term representing the coupling between the agent and its neighbors, and the second one is an additional control input which will be exploited for high-level planning. Hereafter, we call it a free input. A decentralized abstraction procedure is provided, which leads to an individual Transition System (TS) for each agent and provides a basis for high-level planning. Additionally, this abstraction is associated to a time quantization which allows us to assign precise time durations to the transitions of each agent.
%
%
Abstractions for both single and multi-agent systems can be found in \cite{alur_2000_discrete_abstractions, zamani_2012_symbolic, abate_2014_finite_abstractions, PJ_tac, tabuada_compositional_abstractions, boskos_cdc_2015}. Compositional frameworks are provided in \cite{PJ_tac} for safety specifications of discrete time systems, and \cite{tabuada_compositional_abstractions}, which is focused on feedback linearizable systems with a cascade interconnection. Therefore, these results are not applicable to the systems we consider, which evolve in continuous time and do not require a specific network interconnection.

Motivated by our previous work \cite{boskos_cdc_2015}, we start from the consensus dynamics of each agent and we construct a Weighted Transition System (WTS) for each agent in a decentralized manner. Each agent is assigned an individual task given in MITL formulas. We aim to design the free inputs so that each agent performs the desired individual task within specific time bounds. In particular, we provide an automatic controller synthesis method for coupled multi-agent systems under high-level tasks with timed constraints. A motivation for this framework comes from applications such as the deployment of aerial robotic teams. In particular, the consensus coupling allows the robots to stay sufficiently close to each other and maintain a connected network during the evolution of the system. Additionally, individual MITL formulas are leveraged to assign area monitoring tasks to each robot individually. The MITL formalism enables us to impose time constraints on the monitoring process.
%
Compared to existing works on multi-agent planning under temporal logic specifications, the proposed approach considers dynamically coupled multi-agent systems under timed temporal specifications in a distributed way. To the best of the authors' knowledge, this is the first time that a fully automated framework for multi-agent systems consisting of both constructing an abstraction and conducting high-level timed temporal logic planning is considered.

This paper is organized as follows. In Section \ref{sec: preliminaries} a description of the necessary mathematical tools, the notations and the definitions are given. Section \ref{sec: prob_formulation} provides the dynamics of the system and the formal problem statement. Section \ref{sec: solution} discusses the technical details of the solution. Section \ref{sec: simulation_results} is devoted to a simulation example. Finally, conclusions and future work are discussed in Section \ref{sec: conclusions}.

\section{Notation and Preliminaries} \label{sec: preliminaries}

\subsection{Notation}

Denote by $\mathbb{R}, \mathbb{Q}_+, \mathbb{N}$ the set of real, nonnegative rational and natural numbers including 0, respectively. Define also $\mathbb{T}_{\infty} = \mathbb{T} \cup \{\infty\}$ for a set $\mathbb{T} \subseteq \mathbb{R}$; $\mathbb{R}_{\ge 0}$ is the set of real numbers with all elements nonnegative. Given a set $S$, we denote by $|S|$ its cardinality, by $S^N = S \times \dots \times S$, its $N$-fold Cartesian product and by $2^S$ the set of all its subsets. For a subset $S$ of $\mathbb{R}^n$, denote by $\text{cl}(S), \text{int}(S)$ and $\partial S = \text{cl}(S) \backslash \text{int}(S)$ its closure, interior and boundary, respectively, where $\backslash$ is used for set subtraction. The notation $\|x\|$ is used for the Euclidean norm of a vector $x \in \mathbb{R}^n$ and $\|A\| = \text{max} \{\|Ax\| : \|x\| = 1\}$ for the induced norm of a matrix $A \in \mathbb{R}^{m \times n}$. Given a matrix $A$ denote by $\lambda_{\text{max}}(A) = \text{max} \{|\lambda| : \lambda \in \sigma(A) \}$ the spectral radius of $A$, where $\sigma(A)$ is the set of all the eigenvalues of $A$; $A \otimes B$ denotes the Kronecker product of the matrices $A, B \in \mathbb{R}^{m \times n}$ (see \cite{horn1994topics}). Define also by $I_n \in \mathbb{R}^{n \times n}$ the identity matrix.

\subsection{Multi-Agent Systems}
\label{sec:prelims:system}

An \textit{undirected graph} $\mathcal{G}$ is a pair $(\mathcal{I}, \mathcal{E})$, where $\mathcal{I} = \{1,\dots,N\}$ is a finite set of nodes, representing a team of agents, and $\mathcal{E} \subseteq \{\{i,j\}: i,j \in \mathcal{I}, i \neq j\}$, is the set of edges that model the communication capability between the neighboring agents. For each agent, its neighbors' set $\mathcal{N}(i)$ is defined as $\mathcal{N}(i) = \{j_1, \ldots, j_{N_i} \} = \{ j \in \mathcal{I} : \{i,j\} \in \mathcal{E}\}$ where $N_i = |\mathcal N(i)|$. The Laplacian matrix $L(\mathcal{G}) \in \mathbb{R}^{N \times N}$ of the graph $\mathcal{G}$ is defined as $L(\mathcal{G}) = D(\mathcal{G}) D(\mathcal{G})^{\top}$ where $D(\mathcal{G})$ is the $N \times |\mathcal{E}|$ incidence matrix, as it is defined in \cite[Chapter 2]{mesbahi2010graph}. The graph Laplacian $L(\mathcal{G})$ is positive semidefinite and symmetric. If we consider an ordering $0 = \lambda_1(\mathcal{G}) \leq \lambda_2(\mathcal{G}) \leq \ldots \leq \lambda_N(\mathcal{G}) = \lambda_{\text{max}}(\mathcal{G})$ of the eigenvalues of $L(\mathcal{G})$ then we have that $\lambda_2(\mathcal{G}) > 0$ iff $\mathcal{G}$ is connected (\cite[Chapter 2]{mesbahi2010graph}).

Given a vector $x_i = (x^1_i, \ldots, x^n_i) \in \mathbb{R}^n$, the component operator $c(x_i, \ell) = x_i^\ell \in \mathbb{R}, \ell = 1, \ldots, n$ gives the projection of $x_i$ onto its $\ell$-th component (see \cite[Chapter 7]{mesbahi2010graph}). Similarly, for the stack vector $x = (x_1, \ldots, x_N) \in \mathbb{R}^{Nn}$ the component operator is defined as $c(x, \ell) = (c(x_1, \ell), \ldots, c(x_N, \ell)) \in \mathbb{R}^{N}, \ell = 1, \ldots, n$. By using the component operator, the norm of a vector $x \in \mathbb{R}^{Nn}$ can be evaluated as $\|x\| = \left\{ \displaystyle \sum_{\ell = 1}^{n} \| c(x, \ell) \|^2 \right\}^{\frac{1}{2}}$.

Denote by $\widetilde{x} \in \mathbb{R}^{|\mathcal{E}|n}$ the stack column vector of the vectors $x_i-x_j, \{i, j\} \in \mathcal{E}$ with the edges ordered as in the case of the incidence matrix $D(\mathcal{G})$. Then, the following holds:
\begin{equation} \label{eq:xtilde}
\widetilde{x} = \left(D(\mathcal{G})^{\top} \otimes I_n \right) x.
\end{equation}

\subsection{Cell Decompositions}

In the subsequent analysis a discrete partition of the workspace into cells will be considered which is formalized through the following definition. 

\begin{definition} \label{def: cell_decomposition}
	A \textit{cell decomposition} $S = \{S_\ell\}_{\ell \in \mathbb{I}}$ of a set $\mathcal{D} \subseteq \mathbb{R}^n$, where $\mathbb{I} \subseteq \mathbb{N}$ is a finite or countable index set, is a family of uniformly bounded convex sets $S_\ell, \ell \in \mathbb{I}$ such that $\text{int}(S_\ell) \cap \text{int}(S_{\hat{\ell}}) = \emptyset$ for all $\ell, \hat{\ell} \in \mathbb{I}$ with $\ell \neq \hat{\ell}$ and $\cup_{\ell \in \mathbb{I}} S_\ell = \mathcal D$ We assume that the interiors of the cells are non-empty.
\end{definition}

\subsection{Time Sequence, Timed Run and Weighted Transition System}

In this section we include some definitions from computer science that are required to analyze our framework. 

An infinite sequence of elements of a set $X$ is called an \textit{infinite word} over this set and it is denoted by $\chi = \chi(0)\chi(1) \ldots$. The $i$-th element of a sequence is denoted by $\chi(i)$. For certain technical reasons that will be clarified in the sequel, we will assume hereafter that $\mathbb{T} = \mathbb{Q}_+$.

\begin{definition} (\cite{alur1994}) A \textit{time sequence} $\tau = \tau(0) \tau(1) \ldots$ is an infinite sequence of time values $\tau(j) \in \mathbb{T}$, satisfying the following properties:
	\begin{itemize}
		\item Monotonicity: 
		$\tau(j) < \tau(j+1)$ for all $j \geq 0$.
		\item Progress: For every $t \in \mathbb{T}$, there exists 	$\ j \ge 1$, such that $\tau(j) > t$.
	\end{itemize}
\end{definition}

An \textit{atomic proposition} $p$ is a statement 
that is either True $(\top)$ or False $(\bot)$. 

\begin{definition} (\cite{alur1994})
	Let $\AP$ be a finite set of atomic propositions. A \textit{timed word} $w$ over the set $\AP$ is an infinite sequence $w^t = (w(0), \tau(0)) (w(1), \tau(1)) \ldots$ where $w(0) w(1) \ldots$ is an infinite word over the set $2^{\AP}$ and $\tau(0) \tau(1) \ldots$ is a time sequence with $\tau(j) \in \mathbb{T}, \ j \geq 0$.
\end{definition}

\begin{definition} \label{def: WTS}
	A Weighted Transition System (\textit{WTS}) is a tuple $(S, S_0, Act, \longrightarrow, d, AP, L)$ where $S$ is a finite set of states;
	$S_0 \subseteq S$ is a set of initial states;
	$Act$ is a set of actions;
	$\longrightarrow \subseteq S \times Act \times S$ is a transition relation;
	$d: \longrightarrow \rightarrow \mathbb{T}$ is a map that assigns a positive weight to each transition;
	$\AP$ is a finite set of atomic propositions; and
	$L: S \rightarrow 2^{AP}$ is a labeling function.
	The notation $s \overset{\alpha}{\longrightarrow} s'$ is used to denote that $(s, \alpha, s') \in \longrightarrow$ for $s, s' \in S$ and $\alpha \in Act$. For every $s \in S$ and $\alpha \in Act$ define $\text{Post}(s, \alpha) = \{s' \in S : (s, \alpha, s') \in \longrightarrow\}$.
\end{definition}

\begin{definition}\label{run_of_WTS}
	A \textit{timed run} of a WTS is an infinite sequence $r^t = (r(0), \tau(0))(r(1), \tau(1)) \ldots$,
	such that $r(0) \in S_0$, and for all $j \geq 1$, it holds that $r(j) \in S$ and $(r(j), \alpha(j), r(j+1)) \in \longrightarrow$ for a sequence of actions $\alpha(1) \alpha(2) \ldots$ with $\alpha(j) \in Act, \forall \ j \geq 1$. The \textit{time stamps} $\tau(j), j \geq 0$ are inductively defined as:
	\begin{enumerate}
		\item $\tau(0) = 0$.
		\item $\displaystyle \tau(j+1) =  \tau(j) + d(r(j), \alpha(j), r(j+1)), \ \forall \ j \geq 1.$
	\end{enumerate}
	Every timed run $r^t$ generates a \textit{timed word}
	$w(r^t) = 
	(w(0), \tau(0)) \ (w(1), \tau(1)) \ldots$
	over the set $2^{\AP}$ where $w(j) = L(r(j))$, $\forall \ j \geq 0$ is the subset of atomic propositions that are true at state $r(j)$. 
	
\end{definition}

\subsection{Metric Interval Temporal Logic}

The syntax of \textit{Metric Interval Temporal Logic (MITL)} over a set of atomic propositions $AP$ is defined by the grammar:
\begin{equation*}
\varphi := p \ | \ \neg \varphi \ | \ \varphi_1 \wedge \varphi_2 \ | \ \bigcirc_I \varphi  \ | \ \Diamond_I \varphi \mid \square_I \varphi \mid  \varphi_1 \ \mathcal{U}_I \ \varphi_2,
\end{equation*}
where $p \in \AP$, and $\bigcirc$, $\Diamond$, $\square$ and $\mathcal U$ are the next, eventually, always and until temporal operator, respectively; $I = [a, b] \subseteq \mathbb{T}$ where $a, b \in [0, \infty]$ with $a < b$ is a non-empty timed interval. MITL can be interpreted either in continuous or point-wise semantics \cite{pavithra_expressiveness}. In this paper, the latter approach is utilized, since the consideration of point-wise (event-based) semantics is more suitable for the automata-based specifications considered in a discretized state-space. The MITL formulas are interpreted over timed words like the ones produced by a WTS which is given in Def. \ref{run_of_WTS}.

\begin{definition} \label{def:mitl_semantics} (\cite{pavithra_expressiveness}, \cite{quaknine_decidability})
	Given a timed word $w^t = (w(0),\tau(0))(w(1),\tau(1)) \dots$, an MITL formula $\varphi$ and a position $i$ in the timed word, the satisfaction relation $(w^t, i) \models \varphi$, for $\ i \geq 0$ (read $w^t$ satisfies $\varphi$ at position $i$) is inductively defined as follows:
	\begin{align*} \label{eq: for1}
	&(w^t, i) \models p \Leftrightarrow p \in w(i), \\
	&(w^t, i) \models \neg \varphi \Leftrightarrow (w^t, i) \not \models \varphi, \\
	&(w^t, i) \models \varphi_1 \wedge \varphi_2 \Leftrightarrow (w^t, i) \models \varphi_1 \ \text{and} \ (w^t, i) \models \varphi_2, \\
	&(w^t, i) \models \bigcirc_I \ \varphi \Leftrightarrow (w^t, i+1) \models \varphi \ \text{and} \ \tau(i+1) - \tau(i) \in I, \\
	&(w^t, i) \models \Diamond_I \varphi \Leftrightarrow \exists j \ge i, \ \text{such that} \ (w^t, j) \models \varphi, \tau(j)-\tau(i) \in {I}, \\
	&(w^t, i) \models \square_I \varphi \Leftrightarrow \forall j \ge i, \ \tau(j)-\tau(i) \in {I} \Rightarrow (w^t, j) \models \varphi,  \\
	&(w^t, i) \models \varphi_1 \ \mathcal{U}_I \ \varphi_2 \Leftrightarrow \exists j\ge i, \ \text{s.t. } (w^t, j) \models \varphi_2, \tau(j)-\tau(i) \in I \notag \\
	&\hspace{65mm} \text{and} \ (w^t, k) \models \varphi_1, \forall \ i \leq k < j.
	\end{align*}
	We say that a timed run $r^t = (r(0),\tau(0))(r(1),\tau(1)) \dots$ satisfies the MITL formula $\varphi$ (we write $r^t \models \varphi$) if and only if the corresponding timed word $w(r^t) = (w(0),\tau(0))(w(1),\tau(1)) \dots$ with $w(j) = L(r(j)), \forall j \ge 0$, satisfies the MITL formula ($w(r^t) \models \varphi$).
	
\end{definition}

It has been proved that MITL is decidable in infinite words and point-wise semantics, which is the case considered here (see \cite{alur_mitl, reynold} for details). The model checking and satisfiability problems are \textit{EXPSPACE}-complete. It should be noted that in the context of timed systems, EXSPACE complexity is fairly low \cite{bouyer_phd}.

\begin{example}
	Consider the WTS with $S = \{s_0, s_1, s_2\}$, $S_0 = \{s_0\}$, $Act = \emptyset$,  $\longrightarrow = \{(s_0, \emptyset, s_1)$, $(s_1, \emptyset, s_2)$, $(s_1, \emptyset, s_0)$, $(s_2, \emptyset, s_1)\}$, $d((s_0, \emptyset, s_1)) = 1.0$, $d((s_1, \emptyset, s_2)) = 1.5$, $d((s_1, \emptyset, s_0)) = 2.0$, $d((s_2, \emptyset, s_1)) = 0.5$, $AP = \{green\}$, $L(s_0) = \{green\}$, $L(s_1) = L(s_2) = \emptyset$ depicted in Figure 1.
	\begin{figure}[t!] \label{fig:wts_example}
		\centering
		\begin{tikzpicture}[scale = 1.0]       
		\node(pseudo1) at (-1.2,0){};
		\node(0) [line width = 1.0] at (0,0)[shape=circle,draw][fill=green!20]          {$s_0$};
		\node(1) [line width = 1.0] at (2.5,0)[shape=circle,draw]         {$s_1$};
		\node(5) [line width = 1.0] at (5.0,0)[shape=circle,draw]        {$s_2$};
		
		\path [->] [line width = 1.0]
		(0)     edge       [bend left = 15]              node  [above]  {$1.0$}  (1)
		(1)     edge     [bend right = -15]           node  [below]  {$2.0$}     (0)
		(1)     edge     [bend right = -15]           node  [above]  {$1.5$}     (5)
		(5)     edge     [bend right = -15]           node  [below]  {$0.5$}     (1)

		(pseudo1) edge                                       (0);
		\end{tikzpicture}
		\caption{An example of a WTS}
	\end{figure}
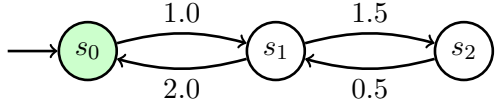
	
	\noindent Let two timed runs of the system be:
	\begin{align*}
	r_1^t &= (s_0, 0.0)(s_1, 1.0)(s_0, 3.0)(s_1, 4.0)(s_0, 6.0) \ldots, \\
	r_2^t &= (s_0, 0.0)(s_1, 1.0)(s_2, 2.5)(s_1, 3.0)(s_0, 5.0) \ldots,
	\end{align*}
	and two MITL formulas $\varphi_1 = \Diamond_{[2, 5]} \{green\}, \varphi_2 = \square_{[0, 5]} \{green\}$. According to the MITL semantics, it follows that the timed run $r_1^t$ satisfies $\varphi_1$ ($r_1^t \models \varphi_1$), since at the time stamp $3.0 \in [2, 5]$ we have that $L(s_0) = \{green\}$ so the atomic proposition $green$ occurs at least once in the given interval. On the other hand, the timed run $r_2^t$ does not satisfy $\varphi_2$ ($r_2^t \not \models \varphi_2$) since the atomic proposition $green$ does not hold at every time stamp of the run $r_2^t$ (it holds only at the time stamp $0.0$).
\end{example}

\subsection{Timed B\"uchi Automata} \label{sec: timed_automata}
\textit{Timed B\"uchi Automata (TBA)} were originally introduced in \cite{alur1994}. In this work, we partially adopt the notation from \cite{bouyer_phd, tripakis_tba}. 
Let $C = \{c_1, \ldots, c_{|C|}\}$ be a finite set of \textit{clocks}. The set of \textit{clock constraints} $\Phi(C)$ is defined by the grammar:
\begin{equation*}
\phi :=  \top \mid \ \neg \phi \ | \ \phi_1 \wedge \phi_2 \ | \ c \bowtie \psi,
\end{equation*}
where $c \in C$ is a clock, $\psi \in \mathbb{T}$ is a clock constant and $\bowtie \ \in  \{ <, >, \geq, \leq, = \}$. A clock \textit{valuation} is a function $\nu: C \rightarrow\mathbb{T}$ that assigns a value to each clock. A clock $c_i$ has valuation $\nu_i$ for $i \in \{1, \ldots, |C|\}$, and $\nu = (\nu_1, \ldots, \nu_{|C|})$. We denote by $\nu \models \phi$ the fact that the valuation $\nu$ satisfies the clock constraint $\phi$.


\begin{definition}
	A \textit{Timed B\"uchi Automaton} is a tuple $\mathcal{A} = (Q, Q^{\text{init}}, C, Inv,
	E, F, AP, \mathcal{L})$ where
	$Q$ is a finite set of locations;
	$Q^{\text{init}} \subseteq Q$ is the set of initial locations;
	$C$ is a finite set of clocks;
	$Inv: Q \rightarrow \Phi(C)$ is the invariant; 
	$E \subseteq Q \times \Phi(C) \times 2^C \times Q$ gives the set of edges of the form $e = (q, \gamma, R, q')$, where $q$, $q'$ are the source and target states, $\gamma$ is the guard of edge $e$ and $R$ is a set of clocks to be reset upon executing the edge;
	$F \subseteq Q$ is a set of accepting locations;
	$\AP$ is a finite set of atomic propositions; and
	$\mathcal{L}: Q \rightarrow 2^{AP}$ labels every state with a subset of atomic propositions.
\end{definition}

A state of $\mathcal{A}$ is a pair $(q, \nu)$ where $q \in Q$ and $\nu$ satisfies the \textit{invariant} $Inv(q)$, i.e., $\nu \models Inv(q)$. The initial state of $\mathcal{A}$ is $(q(0), (0,\ldots,0))$, where $q(0) \in Q^{\text{init}}$. Given two states $(q, \nu)$ and $(q', \nu')$ and an edge $e = (q, \gamma, R, q')$, there exists a \textit{discrete transition} $(q, \nu) \overset{e}{\longrightarrow} (q', \nu')$ iff $\nu \models \gamma$, $\nu' \models Inv(q')$, and $R$ is the \textit{reset set}, i.e., $\nu'_i = 0$ for $c_i \in R$ and $\nu'_i = \nu_i$ for $c_i \notin R$. Given a $\delta \in \mathbb{T}$, there exists a \textit{time transition} $(q, \nu) \overset{\delta}{\longrightarrow} (q', \nu')$ iff $q = q', \nu' = \nu+\delta$ ($\delta$ is summed component-wise) and $\nu' \models Inv(q)$. We write $(q, \nu) \overset{\delta}{\longrightarrow} \overset{e}{\longrightarrow} (q', \nu')$ if there exists $q'', \nu''$ such that $(q, \nu) \overset{\delta}{\longrightarrow} (q'', \nu'')$ and $(q'', \nu'') \overset{e}{\longrightarrow} (q', \nu')$ with $q'' = q$.

An infinite run of $\mathcal{A}$ starting at state $(q(0), \nu)$ is an infinite sequence of time and discrete transitions $(q(0), \nu(0))\overset{\delta_0}{\longrightarrow} (q(0)', \nu(0)') \overset{e_0}{\longrightarrow} (q(1), \nu(1)) \overset{\delta_1}{\longrightarrow} (q(1)', \nu(1)') \ldots$, where $(q(0),\nu(0))$ is an initial state. 

This run produces the timed word $w = (\mathcal{L}(q(0)), \tau(0))(\mathcal{L}(q(1)), \tau(1)) \ldots$ with $\tau(0) = 0$ and $\tau(i+1) = \tau(i) +\delta_i$,  $\forall \ i \geq 1$. The run is called \textit{accepting} if $q(i) \in F$ for infinitely many times. A timed word is \textit{accepted} if there exists an accepting run that produces it. The problem of deciding the language emptiness of a given TBA is PSPACE-complete \cite{alur1994}. In other words, an accepting run of a given TBA can be synthesized, if one exists. In other words, an accepting run of a given TBA can be synthesized, if one exists. Any MITL formula $\varphi$ over $AP$ can be algorithmically translated into a TBA with the alphabet $2^\AP$, such that the language of timed words that satisfy $\varphi$ is the language of timed words produced by the TBA (\cite{alur_mitl, maler_MITL_TA, nickovic_timed_aut, MITL_2_TA_tool}).

\begin{example} \label{ex: example_1}
	A TBA with $Q = \{q_0, q_1, q_2\}, Q^{\text{init}} = \{q_0\}, C = \{c\}, Inv(q_0) = Inv(q_1) = Inv(q_2) = \emptyset, E = \{(q_0, \{c \leq c_2\}, \emptyset, q_0), (q_0, \{c \leq c_1 \vee c > c_2\}, c, q_2), (q_0, \{c \geq c_1 \wedge c \leq c_2\}, c, q_1), (q_1, \top, c, q_1), (q_2, \top,c, q_2)\}, F = \{q_1\}, AP = \{green\}, \mathcal{L}(q_0) = \mathcal{L}(q_2) = \emptyset, \mathcal{L}(q_1) = \{green\} $ that accepts all the timed words that satisfy the formula $\varphi_3 = \Diamond_{[c_1, c_2]} \{green\}$ is depicted in Figure \ref{fig:TBA_example}. This formula will be used as reference for the following examples and simulations.
	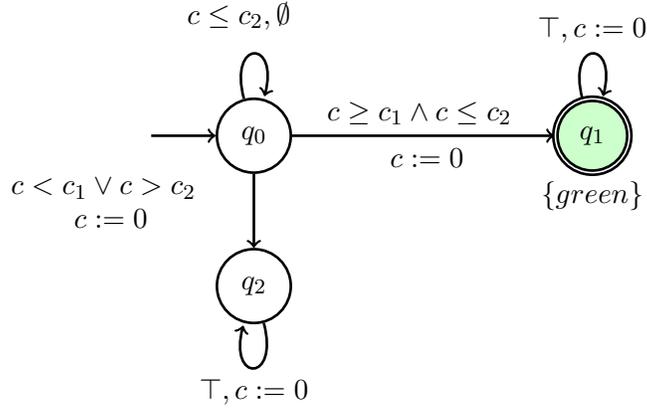
\begin{figure}[t!]
		\centering
		\begin{tikzpicture}[scale = 1.0]
		\node(pseudo) at (-1.5,0){};
		\node(0) [line width = 1.0] at (0,0)[shape=circle,draw] {$\ q_0 \ $};
		\node(1) [line width = 1.0] at (4.5,0)[shape=circle,draw, double][fill=green!20] {$\ q_1 \ $};
		\node(2) [line width = 1.0] at (0,-2)[shape=circle,draw] {$\ q_2 \ $};

		\path [->] [line width = 1.0]
		(0)      edge [bend left = 0]         node [above]  {}     (1)
		(0)      edge [loop above]    node [above]  {}     ()
		(1)      edge [loop above]    node [above]  {$\top, c := 0$}     ()
		(2)      edge [loop below]    node [below]  {$\top, c := 0$}     ()
		(0)      edge [bend left = 0]         node [above]  {}     (2)
		
		(pseudo) edge                                       (0);
		
		\node at (-0.2, 1.6) {$c \leq c_2, \emptyset$};
		\node at (2.2, 0.3) {$c \geq c_1 \wedge c \leq c_2$};
		\node at (2.3, -0.3) {$\tiny c := 0$};
		\node at (-2.0, -0.7) {$\tiny c < c_1 \vee c > c_2$};
		\node at (-1.9, -1.1) {$\tiny c := 0$};
		
		\node at (4.5,-0.8) {$\{green\}$};		
		\end{tikzpicture}
		\caption{A TBA $\mathcal{A}$ that accepts the runs that satisfy formula $\varphi = \Diamond_{[c_1, c_2]} \{green\}$.}
		\label{fig:TBA_example}
	\end{figure}
	
	An example of a timed run of this TBA is $(q_0, 0) \overset{\delta = \alpha_1}{\longrightarrow} (q_0, \alpha_1) \overset{e = (q_0, \{c \geq c_1 \wedge c \leq c_2\}, c, q_1)}{\longrightarrow} (q_1, 0) \ldots$ with $c_1 \leq \alpha_1 \leq c_2 $, which generates the timed word $w^t = (\mathcal{L}(q_0), 0)(\mathcal{L}(q_0), \alpha_1)(\mathcal{L}(q_1), \alpha_1) \ldots = (\emptyset, 0)(\emptyset, \alpha_1) (\{green\}, \alpha_1) \ldots$ that satisfies the formula $\varphi_3$. The timed run $(q_0, 0) \overset{\delta = \alpha_2}{\longrightarrow} (q_0, \alpha_2) \overset{e = (q_0, \{c \leq c_1 \vee c > c_2\}, c, q_2)}{\longrightarrow} (q_2, 0) \ldots$ with $\alpha_2 < c_1$, generates the timed word $w^t = (\mathcal{L}(q_0), 0)(\mathcal{L}(q_0), \alpha_2) (\mathcal{L}(q_2), \alpha_2) \ldots = (\emptyset, 0)(\emptyset, \alpha_2)(\emptyset, \alpha_2) \ldots$ that does not satisfy the formula $\varphi_3$.
\end{example}

\begin{remark}
	Traditionally, the clock constraints and the TBAs are defined with $\mathbb{T} = \mathbb{N}$. However, they can be extended to accommodate $\mathbb{T} = \mathbb{Q}_+$, by multiplying all the rational numbers that are appearing in the state invariants and the edge constraints with their least common multiple.
\end{remark}

\section{Problem Formulation} \label{sec: prob_formulation}

\subsection{System Model}

We focus on multi-agent systems with coupled dynamics of the form:
\begin{equation} \label{eq: system}
\dot{x}_i = -\sum_{j \in \mathcal{N}(i)}^{} (x_i - x_j)+v_{i}, x_i \in \mathbb{R}^n, i \in \mathcal{I}.
\end{equation}
The dynamics \eqref{eq: system} consists of two parts; the first part is a consensus protocol representing the coupling between the agent and its neighbors, and the second one is a control input which will be exploited for high-level planning and is called free input. In this work, it is assumed that the free inputs are bounded by a positive constant $v_{\text{max}}$, i.e., $\| v_i(t) \| \leq v_{\text{max}}, \ \forall \ i \in \mathcal{I}, t \geq 0$.

The topology of the multi-agent network is modeled through an undirected graph $\mathcal{G} = (\mathcal{I},\mathcal{E})$, where $\mathcal{I} = \{1, \ldots, N\}$ and the following assumption is made.

\begin{assumption}
	The communication graph $\mathcal{G} = (\mathcal{I},\mathcal{E})$ of the system is undirected, connected and static i.e., every agent preserves the same neighbors for all times.
\end{assumption}

\subsection{Specification}
Our goal is to control the multi-agent system \eqref{eq: system} so that each agent's behavior obeys a desired individual specification $\varphi_i$ given in MITL. In particular, it is required to drive each agent to a desired subset of the \textit{workspace} $\mathbb R^n$ within certain time limits and provide certain atomic tasks there. Atomic tasks are captured through a finite set of \textit{services} $\Sigma_i, i \in \mathcal{I}$. The position $x_i$ of each agent $i \in \mathcal{I}$ is labeled with services that are offered there. Thus, a \textit{service labeling function}:
\begin{equation} \label{eq:label_lambda}
\Lambda_i:\mathbb{R}^n\to 2^{\Sigma_i},
\end{equation}
is introduced for each agent $i \in \mathcal{I}$ which maps each state $x_i \in \mathbb{R}^n$ to the subset of services $\Lambda_i(x_i)$ which hold true at $x_i$ i.e., the subset of services that the agent $i$ \textit{can} provide in position $x_i$. It is noted that although the term service labeling function it is used, these functions are not necessarily related to the labeling functions of a WTS as in Definition \ref{def: WTS}. Define also by $\Lambda(x) = \bigcup_{i \in \mathcal I} \Lambda_i(x)$ the union of all the service labeling functions. We also assume that $\Sigma_i \cap \Sigma_j = \emptyset$, for all $i,j \in \mathcal{I}, i \neq j$ which means that the agents do not share any services. Let us now introduce the following assumption which is necessary for formally defining the problem.
\begin{assumption}  \label{assumption: AP_cell_decomposition}
	There exists a decomposition $S = \{S_\ell\}_{\ell \in \mathbb I}$ of the workspace $\mathbb{R}^n$ which forms a cell decomposition according to Def. \ref{def: cell_decomposition} and respects the labeling function $\Lambda$ i.e., for all $S_\ell \in S$ it holds that $\Lambda(x) = \Lambda(x'), \forall \ x, x' \in S_\ell$. This assumption implies that the same services hold at all the points that belong to the same cell of the decomposition.
\end{assumption}
\noindent Define for each agent $i \in \mathcal{I}$ the labeling function:
\begin{equation} \label{eq:label_mathcal_lambda}
\mathcal{L}_i: S \to 2^{\Sigma_i},
\end{equation}
which denotes the fact that when agent $i$ visits a region $S_\ell \in S$ and it can choose to \textit{provide} a subset of the services that are being offered there i.e. it chooses to satisfy a subset of $\mathcal{L}_i(S_\ell)$.

The trajectory of each agent $i$ is denoted by $x_i(t), t \geq 0, i \in \mathcal{I}$. The trajectory $x_i(t)$ is associated with a unique sequence $r_{x_i}^t = (r_i(0), \tau_i(0))(r_i(1), \tau_i(1))(r_i(2), \tau_i(2))\ldots$, of regions that the agent $i$ crosses, where for all $j \ge 0$ it holds that: $x_i(\tau_i(j)) \in r_i(j)$ and $\Lambda_i(x_i(t)) = \mathcal{L}_i(r_i(j)), \forall \ t \in [\tau_i(j), \tau_i(j+1))$ for some $r_i(j) \in S$ and $r_i(j) \ne r_i(j+1)$. 
The timed word $w_{x_i}^t = (\mathcal{L}_i(r_i(0)), \tau_i(0))(\mathcal{L}_i(r_i(1)), \tau_i(1))(\mathcal{L}_i(r_i(2)), \tau_i(2))\ldots$, where $w_i(j) = \mathcal{L}_i(r_i(j)), j \ge 0, i \in \mathcal{I}$, is associated uniquely with the trajectory $x_i(t)$, and represents the sequence of services that \textit{can be provided} by the agent $i$ following the trajectory $x_i(t), t \ge 0$. 

\noindent Define a \textit{timed service word} as:
\begin{align} \label{eq:time_serviced_word}
\widetilde{w}_{x_i}^t &= (\beta_i(z_0), \widetilde{\tau}_i(z_0))(\beta_i(z_1), \widetilde{\tau}_i(z_1))(\beta_i(z_2), \widetilde{\tau}_i(z_2)) \ldots,
\end{align}
where $z_0 =0 < z_1 < z_2 < \ldots$ is a sequence of integers, and for all $j \ge 0$ it holds that $\beta_i(z_j) \subseteq \mathcal{L}_i(r_i(z_j))$ and $\widetilde{\tau}_i(z_j) \in [\tau_i(z_j), \tau_i(z_j+1))$. The timed service word is a sequence of services that are actually provided by agent $i$ and is compliant with the trajectory $x_i(t), t \ge 0$ by construction.

The specification task $\varphi_i$ given as an MITL formula over the set of services $\Sigma_i$ as in Def. \ref{def:mitl_semantics}, captures requirements on the services to be provided by agent $i$, for each $i \in \mathcal{I}$. We say that a trajectory $x_i(t)$ satisfies a formula $\varphi_i$  given in MITL over the set $\Sigma_i$, and formally write $x_i(t) \models \varphi_i, \forall t \ge 0$, if and only if there exists a \textit{timed service word} $\widetilde{w}_{x_i}^t$ that complies with $x_i(t)$ and satisfies $\varphi_i$ according to the semantics of Def. \ref{def:mitl_semantics}.

\begin{example} \label{ex: example_01}
	Consider 
	$N = 2$ agents performing in the partitioned environment of Figure \ref{fig: example_01}. Both agents have the ability to pick up, deliver and throw two different balls. Their sets of services are $\Sigma_1 = \{\rm pickUp1, deliver1, throw1\}$ and $\Sigma_2 = \{\rm pickUp2, deliver2, throw2\}$, respectively, and satisfy $\Sigma_1 \cap \Sigma_2 = \emptyset$. Three points of the agents' trajectories that belong to different cells with different services are captured. Assume that $t_1 < t_1' < t_2 < t_2 < t_2' < t_3 < t_3'.$ The trajectories $x_1(t), x_2(t), t \ge 0$ are depicted with the red lines. According to Assumption \ref{assumption: AP_cell_decomposition}, the cell decomposition $S = \{S_\ell\}_{\ell \in \mathbb I} = \{S_1, \ldots, S_6\}$ is given where $\mathbb I = \{1, \ldots, 6\}$ respects the labeling functions $\Lambda_i, \mathcal{L}_i, i \in \{1,2\}$. In particular, it holds that: 
	\begin{align}
	&\Lambda_1(x_1(t)) = \mathcal{L}_1(r_1(0)) = \{\rm pickUp1\}, t \in [0, t_1), \notag \\
	&\Lambda_1(x_1(t)) = \mathcal{L}_1(r_1(1)) = \{\rm throw1\}, t \in [t_1, t_2), \notag \\ 
	&\Lambda_1(x_1(t)) = \mathcal{L}_1(r_1(2)) = \{\rm deliver1\}, t \in [t_2, t_3), \notag \\
	&\Lambda_1(x_1(t)) = \mathcal{L}_1(r_1(3)) = \emptyset, t \ge t_3. \notag \\
	&\Lambda_2(x_2(t)) = \mathcal{L}_2(r_2(0)) = \{\rm pickUp2\}, t \in [0, t_1'), \notag \\
	&\Lambda_2(x_2(t)) = \mathcal{L}_2(r_2(1))  = \{\rm deliver2\}, t \in [t_1', t_2'), \notag \\ 
	&\Lambda_2(x_2(t)) = \mathcal{L}_2(r_2(2)) = \{\rm throw2\}, t \in [t_2', t_3'), \notag \\
	&\Lambda_2(x_2(t)) = \mathcal{L}_2(r_2(3))  = \emptyset, t \ge t_3'. \notag
	\end{align}
	By the fact that $w_i(j) = \mathcal{L}(r_i(j)), \forall \ i \in \{1,2\}, j \in \{1,2,3\}$, the corresponding individual timed words are given as:
	\begin{align}
	w^t_{x_1} &= (\{\rm pickUp1\}, 0)(\{\rm throw1\}, t_1)(\{\rm deliver1\}, t_2)(\emptyset, t_3), \notag \\
	w^t_{x_2} &= (\{\rm pickUp2\}, 0)(\{\rm deliver2\}, t_1')(\{\rm throw2\}, t_2')(\emptyset, t_3'). \notag
	\end{align}
	According to \eqref{eq:time_serviced_word}, two time service words (depicted with in Figure \ref{fig: example_01}) are given as:
	\begin{align*}
	\widetilde{w}_1^t &= (\beta_1(z_0), \widetilde{\tau}_1(z_0))(\beta_1(z_1), \widetilde{\tau}_1(z_1)), \notag \\ 
	\widetilde{w}_2^t &= (\beta_2(z'_0), \widetilde{\tau}_2(z'_0))(\beta_2(z'_1), \widetilde{\tau}_2(z'_1)), \notag \\ 
	\end{align*}
	where for agent 1 we have: $z_0 = 0, z_1 = 2, \beta_1(z_0) = \{\rm pickUp1\} \subseteq \mathcal{L}_1(r_1(z_0)), \beta_1(z_1) = \{\rm deliver1\} \subseteq \mathcal{L}_1(r_1(z_1))$. The corresponding elements for agent 2 are $z'_0 = 0, z'_1 = 2, \beta_2(z'_0) = \{\rm pickUp2\} \subseteq \mathcal{L}_2(r_2(z'_0)), \beta_2(z'_1) = \{\rm throw2\} \subseteq \mathcal{L}_2(r_2(z'_1))$. The time stamps $\widetilde{\tau}_1(z_0), \widetilde{\tau}_1(z_1)$ should satisfy the following conditions:
	\begin{align*}
	\widetilde{\tau}_1(z_0) &\in [\tau_1(z_0), \tau_1(z_0+1)) = [0, t_1), \\
	\widetilde{\tau}_1(z_1) &\in [\tau_1(z_1), \tau_1(z_1+1)) = [t_2, t_3), \\
	\widetilde{\tau}_2(z'_0) &\in [\tau_2(z'_0), \tau_2(z'_1)) = [0, t_1'), \\ 
	\widetilde{\tau}_2(z_1') &\in [\tau_1(z_1'), \tau_1(z_1'+1)) = [t_2', t_3').
	\end{align*}
	
	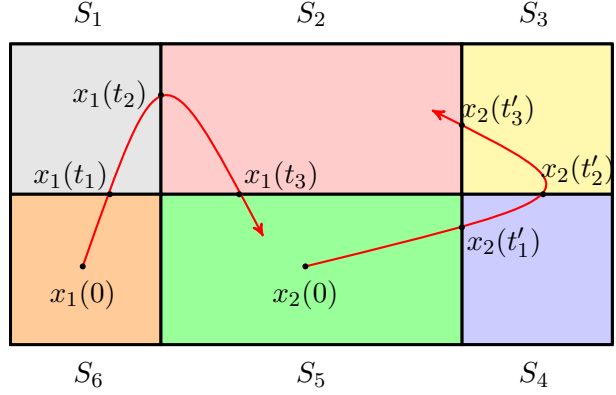
\begin{figure}[t!]
		\centering
		\begin{tikzpicture}[scale = 0.8]
		
		\draw[step=2.5, line width=.04cm] (-2.5, -5.0) grid (0,0);
		\draw[line width=.04cm] (-7.5,0.0) -- (-2.5,0.0);
		\draw[line width=.04cm] (-7.5,-2.5) -- (-2.5,-2.5);
		\draw[line width=.04cm] (-7.5,-5.0) -- (-2.5,-5.0);
		\draw[step=2.5, line width=.04cm] (-10.0, -5.0) grid (-7.5,0);
		
		\filldraw[fill=black!10, line width=.04cm] (-10, -2.5) rectangle +(2.5, 2.5);
		\filldraw[fill=orange!40, line width=.04cm] (-10, -5.0) rectangle +(2.5, 2.5);
		\filldraw[fill=red!20, line width=.04cm] (-7.5, -2.5) rectangle +(5.0, 2.5);
		\filldraw[fill=yellow!40, line width=.04cm] (-2.5, -2.5) rectangle +(2.5, 2.5);
		\filldraw[fill=blue!20, line width=.04cm] (-2.5, -5.0) rectangle +(2.5, 2.5);
		\filldraw[fill=green!40, line width=.04cm] (-7.5, -5.0) rectangle +(5.0, 2.5);
		
		\draw [color=red,thick,->,>=stealth'](-8.8, -3.7) .. controls (-7.50, -0.0) .. (-5.8, -3.2);
		
		\draw [color=red,thick,->,>=stealth'](-5.1, -3.7) .. controls (-0.2, -2.5) .. (-3.0, -1.1);
		
		\draw (-8.8, -3.7) node[circle, inner sep=0.8pt, fill=black, label={below:{$x_1(0)$}}] (A1) {};
		\draw (-8.35, -2.5) node[circle, inner sep=0.8pt, fill=black] (B1) {};
		\draw (-7.50, -0.85) node[circle, inner sep=0.8pt, fill=black, label={left:{$x_1(t_2)$}}] (C1) {};
		\draw (-6.2, -2.5) node[circle, inner sep=0.8pt, fill=black] (D1) {};
		
		\draw (-1.15, -2.5) node[circle, inner sep=0.8pt, fill=black] (A2) {};
		\draw (-2.5, -3.05) node[circle, inner sep=0.8pt, fill=black] (B2) {};
		\draw (-5.1, -3.7) node[circle, inner sep=0.8pt, fill=black, label={below:{$x_2(0)$}}] (C2) {};
		\draw (-2.5, -1.35) node[circle, inner sep=0.8pt, fill=black] (D2) {};
		
		\node at (-8.7, 0.5) {$S_1$};
		\node at (-5.0, 0.5) {$S_2$};
		\node at (-1.3, 0.5) {$S_3$};
		\node at (-1.3, -5.5) {$S_4$};
		\node at (-5.0, -5.5) {$S_5$};
		\node at (-8.7, -5.5) {$S_6$};
		\node at (-9,-2.2) {$x_1(t_1)$};
		\node at (-5.5,-2.2) {$x_1(t_3)$};
		\node at (-1.8,-3.3) {$x_2(t_1')$};
		\node at (-0.57,-2.1) {$x_2(t_2')$};
		\node at (-1.9,-1.1) {$x_2(t_3')$};
		\end{tikzpicture}
		
		\caption{An example of two agents performing in a partitioned workspace.}
		\label{fig: example_01}
	\end{figure}
\end{example}

\subsection{Problem Statement}

We are now ready to define the problem treated in this paper formally as follows:

\begin{problem} \label{problem: basic_prob}
	Given $N$ agents that are governed by dynamics as in \eqref{eq: system}, modeled by the communication graph $\mathcal{G}$, $N$ task specification formulas $\varphi_1, \ldots, \varphi_N$ expressed in MITL over the sets of services $\Sigma_1, \ldots, \Sigma_{{N}}$, respectively, service labeling functions $\Lambda_1, \ldots, \Lambda_N$, as in \eqref{eq:label_lambda}, a cell decomposition $S = \{S_\ell\}_{\ell \in \mathbb I}$ as in Assumption \ref{assumption: AP_cell_decomposition} and the labeling functions $\mathcal{L}_1, \ldots, \mathcal{L}_N$ given by \eqref{eq:label_mathcal_lambda}, assign control laws to the free inputs $v_1, \ldots, v_N$ such that each agent fulfills its individual specification i.e., $x_i(t) \models \varphi_i, \forall i \in \mathcal{I}, t \ge 0$, given the upper bound $v_{\text{max}}$.
\end{problem}

It should be noted that, in this work, the dependencies between the agents are induced through the coupled dynamics \eqref{eq: system}  and not in the discrete level, by allowing for couplings between the services (i.e., $\Sigma_i \cap \Sigma_j \ne \emptyset$, for some $i, j \in \mathcal I$). Hence, even though the agents do not share atomic propositions, the constraints on their motion due to the dynamic couplings may restrict them to fulfill the desired high-level tasks. Treating additional couplings through individual atomic propositions in the discrete level, constitutes a far from trivial problem, which is a topic of current work.

\begin{remark}
	In our previous work on the multi-agent controller synthesis framework under MITL specifications \cite{alex_2016_acc}, the multi-agent system was considered to have fully-actuated dynamics. The only constraints on the system were due to the presence of time constrained MITL formulas. In the current framework, we have two types of constraints: the constraints due to the coupling dynamics of the system \eqref{eq: system}, which constrain the motion of each agent, and, the timed constraints that are inherently imposed from the time bounds of the MITL formulas. Thus, there exist formulas that cannot be satisfied either due to the coupling constraints or the time constraints of the MITL formulas. These constraints, make the procedure of the controller synthesis in the discrete level substantially different and more elaborate than the corresponding multi-agent LTL frameworks in the literature (\cite{guo_2015_reconfiguration, frazzoli_vehicle_routing, belta_2010_product_system, belta_cdc_reduced_communication}).
\end{remark}

\begin{remark}
	The motivation for introducing the cell decomposition $S = \{S_\ell\}_{\ell \in \mathbb I}$ in this Section, comes from the requirement to know a priori which services hold in each part of the workspace. As will be clarified through the problem solution, this is necessary since the abstraction of the workspace (which is part of our proposed solution) may not be compliant with the initial given cell decomposition, and thus, new cell decompositions might be required.
\end{remark}

\section{Proposed Solution} \label{sec: solution}

In this section, a systematic solution to Problem~\ref{problem: basic_prob} is introduced. Our overall approach builds on abstracting system \eqref{eq: system} though a WTS for each agent and exploiting the fact that the timed runs in the $i$-th WTS project onto the trajectories of agent $i$ while preserving the satisfaction of the individual MITL formulas $\varphi_i, i \in \mathcal{I}$. The following analysis is performed:
\begin{enumerate}
	\item Initially, the boundedness of the agents' relative positions is proved, in order to guarantee boundedness of the coupling terms $-\sum_{j \in \mathcal{N}(i)}^{} (x_i - x_j)$. This property is required for the derivation of the symbolic models. (Section \ref{sec: boundedeness}).
	\item We utilize decentralized abstraction techniques for the multi-agent system, i.e., a discretization of both the workspace and time in order to model the motion capabilities of each agent by a WTS $\mathcal{T}_i, \ i \in \mathcal{I}$ (Section \ref{sec: abstration}).
	\item Given the WTSs, consistent runs are defined in order to take into consideration the coupling constraints among the agents. The computation of the product of the individual WTSs is also required (Section \ref{sec: runs consistency}).
	\item A five-step automated procedure for controller synthesis which serves as a solution to Problem \ref{problem: basic_prob} is provided in Section \ref{sec: synthesis}.
	\item Finally, the computational complexity of the proposed approach is discussed in Section \ref{sec:complexity}.
\end{enumerate}
The next sections provide the proposed solution in detail.

\subsection{Boundedness Analysis} \label{sec: boundedeness}

\begin{theorem} \label{theorem: theorem_1}
	Consider the multi-agent system \eqref{eq: system} modeled by the undirected communication graph $\mathcal{G}$. Assume that the network graph is connected (i.e. $\lambda_2(\mathcal{G}) > 0$) and let $v_i, i \in \mathcal{I}$ satisfy $\| v_i(t) \| \leq v_{\text{max}}, \ \forall \ i \in \mathcal{I}, t \geq 0$. Furthermore, let $\bar{R} > K_2 v_{\text{max}}$ be a positive constant, where $K_2 = \frac{2 \sqrt{N} (N-1) \left\| D(\mathcal{G})^{\top} \right\|}{\lambda_2^2(\mathcal{G})} > 0$ and where $D(\mathcal{G})$ is the network adjacency matrix. Then, for each initial condition $x_i(0) \in \mathbb{R}^n$, there exists a time $T > 0$ such that $\widetilde{x}(t) \in \mathcal{X}, \ \forall t \geq T$, where $\mathcal{X} = \{x \in \mathbb{R}^{Nn} : \|\widetilde{x}\| \leq \bar{R}\}$ and with $\widetilde{x}(t)$ as given in \eqref{eq:xtilde}.
\end{theorem}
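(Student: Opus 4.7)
The plan is to carry out an input-to-state stability analysis for the \emph{disagreement} vector $\widetilde{x}$, rather than for $x$ itself. Note that the consensus direction $\mathrm{span}(\mathbf 1_N) \otimes \mathbb R^n$ is invisible to $\widetilde{x}$ because $D(\mathcal G)^\top \mathbf 1_N = 0$, so $\widetilde{x}$ is precisely the component of the state on which connectivity yields a stable dynamics. I would start by stacking the individual dynamics \eqref{eq: system} into $\dot x = -(L(\mathcal G) \otimes I_n) x + v$ and then differentiating \eqref{eq:xtilde}, using $L(\mathcal G) = D(\mathcal G) D(\mathcal G)^\top$, to obtain
\begin{equation*}
\dot{\widetilde x} = -\bigl(D(\mathcal G)^\top D(\mathcal G) \otimes I_n\bigr)\widetilde x + \bigl(D(\mathcal G)^\top \otimes I_n\bigr) v.
\end{equation*}
The crucial structural remark is that $\widetilde x(t)$ lives in $\mathrm{Im}(D(\mathcal G)^\top \otimes I_n)$ for all $t\ge 0$ (equivalently, for every coordinate $\ell$, $c(\widetilde x, \ell) \perp \ker D(\mathcal G)$), and on this subspace the nonzero eigenvalues of $D^\top D$ coincide with those of $L(\mathcal G) = D D^\top$, so the smallest eigenvalue of $D^\top D$ restricted to $\mathrm{Im}(D^\top)$ is $\lambda_2(\mathcal G) > 0$ by the connectivity assumption.

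Next I would pick the Lyapunov candidate $V(\widetilde x) = \tfrac12 \|\widetilde x\|^2$ and bound its derivative along trajectories. Using the component-operator decomposition $\|\widetilde x\|^2 = \sum_{\ell=1}^n \|c(\widetilde x,\ell)\|^2$ and applying the eigenvalue bound coordinate-wise, the dissipation term satisfies
\begin{equation*}
\widetilde x^\top \bigl(D^\top D \otimes I_n\bigr)\widetilde x \;\ge\; \lambda_2(\mathcal G)\,\|\widetilde x\|^2,
\end{equation*}
while Cauchy--Schwarz and the hypothesis $\|v_i(t)\|\le v_{\max}$ give a forcing bound of the form $|\widetilde x^\top (D^\top \otimes I_n) v| \le \|\widetilde x\|\,\|D(\mathcal G)^\top\|\,\sqrt N\, v_{\max}$. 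Combining these, $\dot V \le -2\lambda_2 V + 2\sqrt{2V}\,\|D(\mathcal G)^\top\|\sqrt N\, v_{\max}$, and dividing by $\sqrt{2V}$ yields the scalar inequality
\begin{equation*}
\tfrac{d}{dt}\|\widetilde x\| \;\le\; -\lambda_2(\mathcal G)\,\|\widetilde x\| \;+\; c\, v_{\max},
\end{equation*}
for an explicit constant $c$ depending on $N$ and $\|D(\mathcal G)^\top\|$. Gr\"onwall/comparison then gives $\|\widetilde x(t)\| \le e^{-\lambda_2 t}\|\widetilde x(0)\| + (c/\lambda_2)\,v_{\max}$, which shows that $\|\widetilde x\|$ is ultimately trapped in any ball of radius strictly larger than $(c/\lambda_2)v_{\max}$.

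The last step is to pick $T$ large enough that the exponentially decaying transient has dropped below the slack $\bar R - K_2 v_{\max} > 0$, so that $\widetilde x(t) \in \mathcal X$ for all $t \ge T$. The main obstacle I expect is matching the precise constant $K_2 = \tfrac{2\sqrt N (N-1)\|D(\mathcal G)^\top\|}{\lambda_2^2(\mathcal G)}$ rather than the cleaner $\sqrt N \|D^\top\|/\lambda_2$ that the direct Lyapunov computation produces: the factor $(N-1)/\lambda_2$ suggests that the authors likely go through an explicit variation-of-constants formula $\widetilde x(t) = e^{-(D^\top D)t}\widetilde x(0) + \int_0^t e^{-(D^\top D)(t-s)} (D^\top \otimes I_n) v(s)\,ds$ and bound $\|e^{-(D^\top D)t}\|$ on the disagreement subspace by a sum of $N-1$ exponentials, exploiting that $D^\top D$ has exactly $N-1$ nonzero eigenvalues when $\mathcal G$ is connected, and then majorize each by $e^{-\lambda_2 t}$. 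This refinement is the only delicate piece; the rest of the argument is the standard ISS-for-Laplacian-dynamics template.
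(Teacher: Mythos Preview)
Your argument is correct and in fact yields a \emph{sharper} ultimate bound than the paper's, namely $\tfrac{\sqrt{N}\,\|D(\mathcal G)^\top\|}{\lambda_2(\mathcal G)}\,v_{\max}$; since $\lambda_2(\mathcal G)\le N\le 2(N-1)$ for $N\ge 2$, this is always $\le K_2 v_{\max}$, so the statement as formulated follows a fortiori. The route, however, is genuinely different from the paper's. You work directly in the edge coordinates $\widetilde x$, derive the edge-Laplacian dynamics $\dot{\widetilde x}=-(D^\top D\otimes I_n)\widetilde x+(D^\top\otimes I_n)v$, and exploit that $\widetilde x\in\mathrm{Im}(D^\top\otimes I_n)$, where $D^\top D$ has smallest eigenvalue $\lambda_2$. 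This gives the clean dissipation estimate $\widetilde x^\top(D^\top D\otimes I_n)\widetilde x\ge\lambda_2\|\widetilde x\|^2$ in one step.

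The paper instead takes $V(x)=\|\widetilde x\|^2$ as a function of $x$ (not of $\widetilde x$), computes $\dot V$ via $\nabla_x V=(L\otimes I_n)x$, and bounds the dissipation term $\sum_k\|L\,c(x,k)\|^2$ through an auxiliary lemma: first $\|L\,c(x,k)\|\ge\lambda_2\|c(x^\perp,k)\|$ with $x^\perp$ the projection onto $(\mathrm{span}(\mathbf 1_N)\otimes\mathbb R^n)^\perp$, and then $\|x^\perp\|\ge\tfrac{1}{\sqrt{2(N-1)}}\|\widetilde x\|$. Chaining these two inequalities produces $\dot V\le-\tfrac{\lambda_2^2}{2(N-1)}\|\widetilde x\|^2+\sqrt N\,\|D^\top\|\,\|\widetilde x\|\,v_{\max}$, which is exactly where the looser constant $K_2$ comes from. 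They then finish with a level-set invariance argument rather than Gr\"onwall. So your speculation about a variation-of-constants formula with a sum over $N-1$ eigenvalues is off the mark: the extra $(N-1)/\lambda_2$ is an artifact of routing the estimate through $x^\perp$ instead of staying in edge coordinates. Your approach buys a tighter constant and a shorter proof; the paper's buys only compatibility with their cited lemma.
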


\begin{proof}
	The proof can be found in Appendix \ref{app:proof_theorem_1}.
\end{proof}

It should be noticed that the relative boundedness of the agents' positions guarantees a global bound on the coupling terms $-\sum_{j \in \mathcal{N}(i)}^{} (x_i - x_j)$, as defined in \eqref{eq: system}. This bound will be later exploited in order to capture the behavior of the system in $\mathcal{X} = \{x \in \mathbb{R}^{Nn} : \|\widetilde{x}\| \leq \bar{R}\}$, by a discrete state WTS.

\subsection{Abstraction} \label{sec: abstration}

In this section we provide the abstraction technique that is adopted from our previous work \cite{boskos_cdc_2015} in order to capture the dynamics of each agent into Transition Systems. Thereafter, we work completely in discrete level, which is necessary in order to solve Problem \ref{problem: basic_prob}.

Firstly, some additional notation is introduced. Given an index set $\mathbb{I}$ and an agent $i \in \mathcal{I}$ with neighbors $j_1, \ldots, j_{N_i} \in \mathcal{N}(i)$, define the mapping $\text{pr}_i : \mathbb{I}^N \rightarrow \mathbb{I}^{N_i+1}$ which assigns to each $N$-tuple ${\bf{l}}= (l_1, \ldots, l_N) \in \mathbb{I}^N$ the $N_i+1$ tuple ${\bf{l}}_i = (l_i, l_{j_1}, \ldots, l_{j_{N_i}}) \in \mathbb{I}^{N_i+1}$ which denotes the indices of the cells where the agent $i$ and its neighbors belong.

\subsubsection{Well-Posed Abstractions}

Loosely speaking, an abstraction is characterized by a discretization of the workspace into cells, which we denote by $\bar{S} = \{\bar{S}_{l}\}_{l \in \bar{\mathbb{I}}}$, a time step $\delta t$ and selection of feedback laws in place of the free inputs $v_i(t), \forall i \in \mathcal{I}$. The time step $\delta t$ models the time that an agent needs to transit from one cell to another, and $v_i(t)$ is the controller that guarantees such a transition. Note that the time step $\delta t$ is the same for all the agents. Let us denote by $(\bar{S},\delta t)$ the aforementioned \emph{space-time discretization}. 

Before defining formally the concept of well-posed abstractions, an intuitive graphical representation is provided. 
Consider a cell decomposition $\bar{S} = \{\bar{S}_{l}\}_{l \in \bar{\mathbb{I}} = \{1, \ldots, 12\}}$ as depicted in Figure \ref{fig: well_possed_abstraction} and a time step $\delta t$. The tails and the tips of the arrows in the figure depict the initial cell and the endpoints of agent's $i$ trajectories at time $\delta t$ respectively. In both cases in the figure we focus on agent $i$ and consider the same cell configuration for $i$ and its neighbors. By configuration we mean the cell that the agent $i$ and its neighbors belong at a current time. However, different dynamics are considered for Cases (i) and (ii). In Case (i), it can be observed that for the three distinct initial positions in cell $\bar{S}_{l_i}$, it is possible to drive agent $i$ to cell $\bar{S}_{l_i'}$ at time $\delta t$. We assume that this is possible for all initial conditions in this cell and irrespectively of the initial conditions of $i$'s neighbors in their cells and the inputs they choose. It is also assumed that this property holds for all possible cell configurations of $i$ and for all the agents of the system. Thus, we have a \emph{well-posed discretization} for system (i). On the other hand, for the same cell configuration and system (ii), the following can be observed. For three distinct initial conditions of $i$ the corresponding reachable sets at $\delta t$, which are enclosed in the dashed circles, lie in different cells. Thus, it is not possible given this cell configuration of $i$ to find a cell in the decomposition which is reachable from every point in the initial cell and we conclude that discretization is not well-posed for system (ii).

\begin{figure}[t!] 
	\begin{center}
		\begin{tikzpicture} [scale=.80]
		
		\draw[color=gray,thick] (0,0) -- (4,0);
		\draw[color=gray,thick] (0,1) -- (4,1);
		\draw[color=gray,thick] (0,2) -- (4,2);
		\draw[color=gray,thick] (0,3) -- (4,3);
		
		\draw[color=gray,thick] (0,0) -- (0,3);
		\draw[color=gray,thick] (1,0) -- (1,3);
		\draw[color=gray,thick] (2,0) -- (2,3);
		\draw[color=gray,thick] (3,0) -- (3,3);
		\draw[color=gray,thick] (4,0) -- (4,3);
		
		\draw[color=blue,very thick] (0,0) -- (1,0) -- (1,1) -- (0,1) -- (0,0);
		\draw[color=green,very thick] (2,1) -- (3,1) -- (3,2) -- (2,2) -- (2,1);
		\draw[color=green,very thick] (3,2) -- (4,2) -- (4,3) -- (3,3) -- (3,2);
		\draw[color=cyan,very thick] (0,2) -- (1,2) -- (1,3) -- (0,3) -- (0,2);
		
		\fill[yellow] (0.3,2.5) circle (0.3cm);
		\fill[yellow] (0.4,2.7) circle (0.3cm);
		\fill[yellow] (0.5,2.6) circle (0.3cm);
		
		\draw[black, dashed] (0.3,2.5) circle (0.3cm);
		\draw[black, dashed] (0.4,2.7) circle (0.3cm);
		\draw[black, dashed] (0.5,2.6) circle (0.3cm);

		\draw [color=red,thick,->,>=stealth'](0.2,0.3) .. controls (0.3,1.5) .. (0.3,2.5);
		\fill[black] (0.2,0.3) circle (1.5pt);
		
		\draw [color=red,thick,->,>=stealth'](0.5,0.8) .. controls (0.5,1.5) .. (0.4,2.7);
		\fill[black] (0.5,0.8) circle (1.5pt);
		
		\draw [color=red,thick,->,>=stealth'](0.7,0.6) .. controls (0.7,1.5) .. (0.5,2.6);
		\fill[black] (0.7,0.6) circle (1.5pt);

		\coordinate [label=left:$\bar{S}_{l_i}$] (A) at (0,0.5);
		\coordinate [label=above left:$x_{i}$] (A) at (0.85,0.15);
		\coordinate [label=right:$\bar{S}_{l_{j_1}}$] (A) at (3,1.5);
		\fill[black] (2.3,1.4) circle (1.5pt) node[right]{$x_{j_{1}}$};
		\coordinate [label=right:$\bar{S}_{l_{j_2}}$] (A) at (4,2.5);
		\fill[black] (3.4,2.8) circle (1.5pt) node[below]{$x_{j_{2}}$};
		\coordinate [label=left:$\bar{S}_{l_i'}$] (A) at (0,2.5);
		
		\draw[color=gray,thick] (5,0) -- (9,0);
		\draw[color=gray,thick] (5,1) -- (9,1);
		\draw[color=gray,thick] (5,2) -- (9,2);
		\draw[color=gray,thick] (5,3) -- (9,3);
		
		\draw[color=gray,thick] (5,0) -- (5,3);
		\draw[color=gray,thick] (6,0) -- (6,3);
		\draw[color=gray,thick] (7,0) -- (7,3);
		\draw[color=gray,thick] (8,0) -- (8,3);
		\draw[color=gray,thick] (9,0) -- (9,3);
		
		\draw[color=blue,very thick] (5,0) -- (6,0) -- (6,1) -- (5,1) -- (5,0);
		\draw[color=green,very thick] (7,1) -- (8,1) -- (8,2) -- (7,2) -- (7,1);
		\draw[color=green,very thick] (8,2) -- (9,2) -- (9,3) -- (8,3) -- (8,2);
		\draw[color=cyan,dashed,very thick] (5,2) -- (6,2) -- (6,3) -- (5,3) -- (5,2);
		\draw[color=cyan,dashed,very thick] (6,2) -- (7,2) -- (7,3) -- (6,3) -- (6,2);
		\draw[color=cyan,dashed,very thick] (6,1) -- (7,1) -- (7,2) -- (6,2) -- (6,1);

		\fill[yellow] (5.3,2.5) circle (0.3cm);
		\fill[yellow] (6.4,2.7) circle (0.3cm);
		\fill[yellow] (6.5,1.7) circle (0.3cm);
		
		\draw[black, dashed] (5.3,2.5) circle (0.3cm);
		\draw[black, dashed] (6.4,2.7) circle (0.3cm);
		\draw[black, dashed] (6.5,1.7) circle (0.3cm);
		
		\draw [color=red,thick,->,>=stealth'](5.2,0.3) .. controls (5.3,1.5) .. (5.3,2.5);
		\fill[black] (5.2,0.3) circle (1.5pt);
		
		\draw [color=red,thick,->,>=stealth'](5.5,0.8) .. controls (5.5,1.5) .. (6.4,2.7);
		\fill[black] (5.5,0.8) circle (1.5pt);
		
		\draw [color=red,thick,->,>=stealth'](5.7,0.6) .. controls (5.7,1) .. (6.5,1.7);
		\fill[black] (5.7,0.6) circle (1.5pt);

		\coordinate [label=below:System (i)] (A) at (2,-0.5);
		\coordinate [label=below:System (ii)] (A) at (7,-0.5);
		
		\coordinate [label=left:$\bar{S}_{l_i}$] (A) at (5,0.5);
		\coordinate [label=above left:$x_{i}$] (A) at (5.85,0.15);
		\coordinate [label=right:$\bar{S}_{l_{j_1}}$] (A) at (8,1.5);
		\fill[black] (7.3,1.4) circle (1.5pt) node[right]{$x_{j_{1}}$};
		\coordinate [label=right:$\bar{S}_{l_{j_2}}$] (A) at (9,2.5);
		\fill[black] (8.4,2.8) circle (1.5pt) node[below]{$x_{j_{2}}$};
		
		\draw[dashed,->,>=stealth] (0.4,2.7) -- (1.5,3.5);
		\coordinate [label=right:$x_{i}(\delta t)$] (A) at (1.5,3.5);
		
		\draw[dashed,->,>=stealth] (5.3,2.5) -- (6.5,3.5);
		\coordinate [label=right:$x_{i}(\delta t)$] (A) at (6.5,3.5);
		
		\end{tikzpicture}
		\caption{Illustration of a space-time discretization which is well posed for system (i) but non-well posed for system (ii).}
		\label{fig: well_possed_abstraction}
	\end{center}
\end{figure}
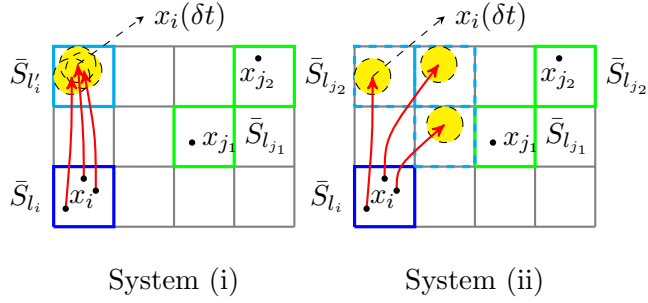

More specifically, consider a $(\bar{S},\delta t)$-space-time discretization which is the outcome of the abstraction technique that is designed for the problem solution and will be presented in Section \ref{sec:discrete_system_abstraction}. Let $\bar{S} = \{\bar{S}_{l}\}_{l \in \bar{\mathbb{I}}}$ be a cell decomposition in which the agent $i$ occupies the cell $\bar{S}_{l_i}$, $\delta t$ be a time step and $\bar{d}_{\text{max}} = \text{sup} \{ \|x - y\| : x,y \in \bar{S}_l, l \in \mathbb{I} \}$ be the diameter of the cell decomposition $\bar{S}$. It should be noted that this decomposition is not necessarily the same cell decomposition $S$ from Assumption \ref{assumption: AP_cell_decomposition} and Problem \ref{problem: basic_prob}. Through the aforementioned space and time discretization $(\bar{S},\delta t)$ we aim to capture the reachability properties of the continuous system \eqref{eq: system}, in order to create a WTS for each agent. 
If there exists a free input for each state in $\bar{S}_{l_i}$ that navigates the agent $i$ into the cell $\bar{S}_{l_i'}$ precisely in time $\delta t$, regardless of the locations of the agent $i$'s neighbors within their current cells, then a transition from $l_i$ to $\l_i'$ is enabled in the WTS. This forms the well-possessedness of transitions. A more detailed mathematical derivation as well as feedback laws $v_i(t), i \in \mathcal{I}$ which guarantee a well-posed space-time discretization $(\bar{S},\delta t)$ can be found in our previous work \cite{boskos_cdc_2015}.

\subsubsection{Sufficient Conditions}

We present at this point the sufficient conditions that relate the dynamics of the multi-agent system \eqref{eq: system}, the time step $\delta t$ and the diameter $\bar{d}_{\text{max}}$, and guarantee the existence of the aforementioned well-posed transitions for each cell. Based on our previous work \cite{boskos_cdc_2015} (Section III, inequality (3), Section IV, inequalities (28, 29)), in order to derive well-posed abstractions, a nonlinear system of the form:
\begin{equation} \label{eq:boskos_dynamics}
\dot{x}_i = f_i(x_i, \mathbf{x}_j)+v_i, i \in \mathcal{I},
\end{equation}
where $\mathbf{x}_j = (x_{j_1}, \ldots, x_{j_{N_i}}) \in \mathbb{R}^{N_i n}$, should fulfill the following sufficient conditions:

\noindent \textbf{(C1)} There exists $M > v_{\text{max}} > 0$ such that $\|f_i(x_i, \mathbf{x}_j) \| \leq M, \ \forall i \in \mathcal{I}, \forall \ x \in \mathbb{R}^{Nn} : \text{pr}_i(x) = (x_i, \mathbf{x}_j) \ \text{and} \  \widetilde{x} \in \mathcal{X}$, by applying the projection operator $\text{pr}_i$ for $\mathbb{I} = \mathbb{R}^n$.

\noindent \textbf{(C2)} There exists a Lipschitz constant $L_1 > 0$ such that:
\begin{align}
\| f_i(x_i, \mathbf{x}_j) - f_i(x_i, \mathbf{y}_j) \| \leq L_1 \|(x_i, \mathbf{x}_j) - (x_i, \mathbf{y}_j) \|, \forall \ i \in \mathcal{I}, x_i, y_i \in \mathbb{R}^n, \mathbf{x}_j, \mathbf{y}_j \in \mathbb{R}^{N_in}. 
\label{eq:lipsitch_1}
\end{align}

\noindent \textbf{(C3)} There exists a Lipschitz constant $L_2 > 0$ such that:
\begin{align*}
\| f_i(x_i, \mathbf{x}_j) - f_i(y_i, \mathbf{x}_j) \| \leq L_2 \|(x_i, \mathbf{x}_j) - (y_i, \mathbf{x}_j) \|, \forall \ i \in \mathcal{I}, x_i, y_i \in \mathbb{R}^n, \mathbf{x}_j, \mathbf{y}_j \in \mathbb{R}^{N_i n}.
\end{align*}

\noindent From \eqref{eq: system} and \eqref{eq:boskos_dynamics} we get $f_i(x_i, \mathbf{x}_j) = \displaystyle -\sum_{j \in \mathcal{N}(i)}^{}(x_i-x_j)$. By checking all the conditions one by one for $f_i(x_i, \mathbf{x}_j)$ as in \eqref{eq: system}, it can be shown that our system satisfies all the conditions \textbf{(C1)}-\textbf{(C3)}. The proof can be found in Appendix \ref{app:suff_conditions}.

Based on the sufficient condition for well posed abstractions in \cite{boskos_cdc_2015}, the diameter $\bar{d}_{\max}$ and the time step $\delta t$ of the discretization $\bar{S}, \delta t$ can be selected as:
\begin{subequations}
	\begin{align}
	\bar{d}_{\max} & \in \left(0,\frac{(1-\lambda)^2 v_{\max}^{2}}{4ML}\right], \label{dmax} \\
	\delta t \in \Bigg[ & \frac{(1-\lambda)v_{\max}-\sqrt{(1-\lambda)^2 v_{\max}^{2}-4ML \bar{d}_{\max}}}{2ML}, \notag \\
	 & \frac{(1-\lambda)v_{\max}+\sqrt{(1-\lambda)^2 v_{\max}^{2}-4ML \bar{d}_{\max}}}{2ML} \Bigg], \label{deltat}
	\end{align}
\end{subequations}

where $L=\max\{3L_{2}+4L_{1}\sqrt{N_i},i\in\mathcal{I}\}$ and with the dynamics bound $M$ and the Lipschitz constants $L_1$, $L_2$ as previously defined. Furthermore, $\lambda\in(0,1)$ is a design parameter which quantifies the part of the free input that is additionally exploited for reachability purposes. In particular, given an agent's initial cell configuration, the agent can reach any point inside an appropriate ball at $\delta t$  through a parameterized feedback law in place of the free input $v_i$.  The radius of this ball increases proportionally to the value of $\lambda$, and thus, also the number of the agent's successor cells, which are the ones intersecting the ball. It is noted that an increasing choice of $\lambda$ results in finer discretizations, therefore providing a quantifiable trade-off between the discrete model's accuracy and complexity. Furthermore, it follows from the acceptable values of $\bar d_{\max}$ that the cells can be selected coarser, when (i) the available control $v_{\max}$ is larger, and, (ii) the coupling term bound $M$ together with the dynamics' variation, which is captured through the parameter $L$, are smaller. Analogous restrictions need to hold for the time step $\delta t$. In particular, the time step cannot be selected very large, because the required  control for the manipulation of the coupling terms increases due to the evolution of the agent's neighbors during the transition interval. Finally, the time step cannot be selected very small either, because controlling the agent to the same point from each initial condition in its cell, will require a large control effort over a very short transition interval.

\begin{remark} \label{remark:d_max_remark}
	Assume that a cell-decomposition of diameter $\bar{d}_{\text{max}}$ and a time step $\delta t$ which guarantee well-posed transitions, namely, which satisfy \eqref{dmax} and \eqref{deltat}, have been chosen. Then, it is also possible to choose any other cell-decomposition with diameter $\hat{d}_{\text{max}} \leq \bar{d}_{\text{max}}$ since, by \eqref{deltat}, the range of acceptable $\delta t$ increases.
\end{remark}

Having shown that the dynamics of system \eqref{eq: system} satisfy the sufficient conditions \textbf{(C1)}-\textbf{(C3)}, a well-posed space-time discretization $(\bar{S}, \delta t)$ has been obtained. Recall now Assumption \ref{assumption: AP_cell_decomposition}. It remains to establish the compliance of the cell decomposition $S = \{S_\ell\}_{\ell \in \mathbb{I}}$, which is given in the statement of Problem \ref{problem: basic_prob}, with the cell decomposition $\bar{S} = \{\bar{S}_l\}_{l \in \bar{\mathbb{I}}}$, which is the outcome of the abstraction. By the term of compliance, we mean that:
\begin{equation*} \label{eq:cell_decomposition_compliance}
\bar{S}_l \cap S_\ell \in S \cup \{\emptyset\}, \forall \ \bar{S}_l \in \bar{S}, S_\ell \in S, l \in \bar{\mathbb{I}}, \ell \in \mathbb{I}.
\end{equation*} 
In order to address this problem, define:
\begin{equation*} \label{eq:final_decomposition}
\hat{S} = \{ \hat{S}_{\hat{l}} \}_{\hat{l} \in \hat{\mathbb{I}}} = \{ \bar{S}_l \cap S_\ell : \ l \in \bar{\mathbb{I}}, \ell \in \mathbb{I} \} \backslash \{ \emptyset \},
\end{equation*}
which forms a cell decomposition and is compliant with the cell decomposition $S$ from Problem \ref{problem: basic_prob} with diameter $\hat{d}_{\text{max}} = \text{sup} \{ \|x - y\| : x,y \in \hat{S}_{\hat{l}}, \hat{l} \in \hat{\mathbb{I}} \} \leq \bar{d}_{\text{max}}$ and serves as the abstraction solution of this problem. 
By taking all the intersections $\bar{S}_l \cap S_\ell, \forall \ l \in \bar{\mathbb{I}}, \forall \ \ell \in \mathbb{I}$ and enumerating them through the index set $\hat{\mathbb{I}}$, the cells $\{ \hat{S}_{\hat{l}} \}_{\hat{l} \in \hat{\mathbb{I}}}$ are constructed. For the cells $\{ \hat{S}_{\hat{l}} \}_{\hat{l} \in \hat{\mathbb{I}}}$, the following holds: $\forall \ \ell \in \mathbb{I}, \exists \ l \in \bar{\mathbb{I}}$ such that $\hat{S}_{\hat{l}} = \bar{S}_l \cap S_\ell$ and $\text{int}(\hat{S}_{\hat{l}}) \cap \text{int}(\hat{S}_{\hat{l}'}) \neq \emptyset$ for all $\hat{l}' \in \hat{\mathbb{I}} \backslash \{\hat{l}\}$. After all the intersections we have $\cup_{\hat{l} \in \hat{\mathbb{I}}} \hat{S}_{\hat{l}} = \mathcal{X}$. The diameter of the cell decomposition $\hat{S} =  \{ \hat{S}_{\hat{l}} \}_{\hat{l} \in \hat{\mathbb{I}}}$ is defined as $\hat{d}_{\text{max}} = \text{sup} \{ \|x - y\| : x,y \in \hat{S}_{\hat{l}}, \hat{l} \in \hat{\mathbb{I}} \} \leq \bar{d}_{\text{max}}$. Hence, according to Remark \ref{remark:d_max_remark}, we have a well-posed abstraction. The following Example is an illustration of these derivations. 

\begin{example} \label{ex: example_03}
	Let $S = \{S_\ell\}_{\ell \in \{1,\ldots,6\}}$ be the cell decomposition of Problem \ref{problem: basic_prob}, which is depicted in Figure \ref{fig: example_03} by the red rectangles. In the same figure, we illustrate the cell decomposition $\bar{S} = \{\bar{S}_l\}_{l \in \bar{\mathbb{I}} = \{1, \ldots, 6\}}$. $\bar{S}$ serves as potential solution of this Problem satisfying all the abstraction properties that have been mentioned in this Section. It can be observed that the two cell decompositions are not compliant according to \eqref{eq:cell_decomposition_compliance}. However, by using the methodology below Remark \ref{remark:d_max_remark}, a new cell decomposition $\hat{S} = \{ \hat{S}_{\hat{l}} \}_{\hat{l} \in \hat{\mathbb{I}} = \{1, \ldots, 15\}}$, which is compliant with $S$ and has $15$ regions, can be obtained. $\hat{S}$ forms the final cell decomposition solution of Problem \ref{problem: basic_prob} and is depicted in Figure \ref{fig:example_04}. Let also $\bar{d}_{\text{max}}, \hat{d}_\text{max}$ be the diameters of the cell decompositions $\bar{S}, \hat{S}$ respectively. Then, it holds that $\hat{d}_\text{max} \leq \bar{d}_{\text{max}}$, which is in accordance with Remark \ref{remark:d_max_remark}.
	
	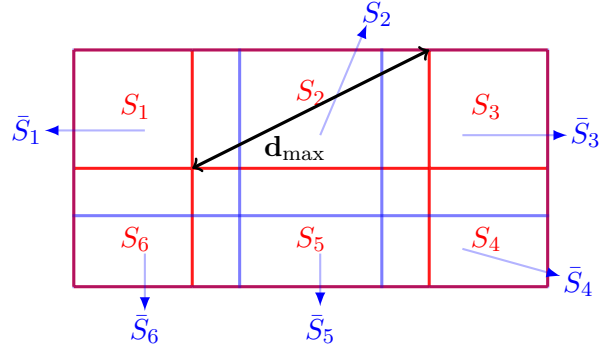
\begin{figure}[t!]
		\centering
		\begin{tikzpicture}[scale = 0.63]
		
		\draw[step=2.5, line width=.04cm, color = red, draw opacity = 0.9] (-2.5, -5.0) grid (0,0);
		\draw[line width=.04cm, color = red, draw opacity = 0.9] (-7.5,0.0) -- (-2.5,0.0);
		\draw[line width=.04cm, color = red, draw opacity = 0.9] (-7.5,-2.5) -- (-2.5,-2.5);
		\draw[line width=.04cm, color = red, draw opacity = 0.9] (-7.5,-5.0) -- (-2.5,-5.0);
		\draw[step=2.5, line width=.04cm, color = red, draw opacity = 0.9] (-10.0, -5.0) grid (-7.5,0);
		
		\draw[line width=.04cm, color = blue, draw opacity = 0.3] (0,0) rectangle (-3.5,-3.5);
		\draw[line width=.04cm, color = blue, draw opacity = 0.3] (0,-3.5) rectangle (-3.5,-5.0);
		\draw[line width=.04cm, color = blue, draw opacity = 0.3] (-3.5,-3.5) rectangle (-6.5,-5);
		\draw[line width=.04cm, color = blue, draw opacity = 0.3] (-3.5,0) rectangle (-6.5,-3.5);
		\draw[line width=.04cm, color = blue, draw opacity = 0.3] (-6.5,0) rectangle (-10.0,-3.5);
		\draw[line width=.04cm, color = blue, draw opacity = 0.3] (-6.5,-3.5) rectangle (-10.0,-5.0);
		
		\node[color = red] at (-8.7, -1.2) {$S_1$};
		\node[color = red] at (-5.0, -0.9) {$S_2$};
		\node[color = red] at (-1.3, -1.2) {$S_3$};
		\node[color = red] at (-1.3, -4.0) {$S_4$};
		\node[color = red] at (-5.0, -4.0) {$S_5$};
		\node[color = red] at (-8.7, -4.0) {$S_6$};
		
		\draw [-latex, color = blue, draw opacity = 0.3, thick, shorten >= 1.00cm] (-1.8, -1.8) -- (2.0, -1.8);
		\node[color = blue, draw opacity = 0.3] at (0.80, -1.75) {$\bar{S}_3$};
		
		\draw [-latex, color = blue, draw opacity = 0.3, thick, shorten >= 1.00cm] (-1.8, -4.2) -- (1.8, -5.2);
		\node[color = blue, draw opacity = 0.3] at (0.65, -4.9) {$\bar{S}_4$};
		
		\draw [-latex, color = blue, draw opacity = 0.3, thick, shorten >= 1.00cm] (-4.8, -1.8) -- (-3.2, 2.0);
		\node[color = blue, draw opacity = 0.3] at (-3.6, 0.8) {$\bar{S}_2$};
		
		\draw [-latex, color = blue, draw opacity = 0.3, thick, shorten >= 1.00cm] (-4.8, -4.3) -- (-4.8, -7.0);
		\node[color = blue, draw opacity = 0.3] at (-4.8, -5.9) {$\bar{S}_5$};
		
		\draw [-latex, color = blue, draw opacity = 0.3, thick, shorten >= 1.00cm] (-8.5, -4.3) -- (-8.5, -7.1);
		\node[color = blue, draw opacity = 0.3] at (-8.5, -5.9) {$\bar{S}_6$};
		
		\draw [-latex, color = blue, draw opacity = 0.3, thick, shorten >= 1.00cm] (-8.5, -1.7) -- (-12.2, -1.7);
		\node[color = blue, draw opacity = 0.3] at (-11, -1.7) {$\bar{S}_1$};
		
		\draw[color=black,very thick,<->] (-2.5,0) -- (-7.5,-2.5);
		\coordinate [label=below right:$\bf{d_{\max}}$] (A) at (-6.2,-1.6);
		
		\end{tikzpicture}
		\caption{An example with a given cell decomposition $S = \{S_l\}_{l \in \{1,\ldots,6\}}$ from Problem \ref{problem: basic_prob} and a non-compliant cell decomposition $\bar{S} = \{\bar{S}_l\}_{l \in \bar{\mathbb{I}} = \{1, \ldots, 6\}}$ which is the outcome of the proposed abstraction technique.}
		\label{fig: example_03}
	\end{figure}
	
	\begin{figure}[t!]
		\centering
		\begin{tikzpicture}[scale = 0.63]
		
		\draw[step=2.5, line width=.04cm, color = black, draw opacity = 0.9] (-2.5, -5.0) grid (0,0);
		\draw[line width=.04cm, color = black, draw opacity = 0.9] (0,0) rectangle (-3.5,-3.5);
		\draw[line width=.04cm, color = black, draw opacity = 0.9] (-7.5,0.0) -- (-2.5,0.0);
		\draw[line width=.04cm, color = black, draw opacity = 0.9] (-7.5,-2.5) -- (-2.5,-2.5);
		\draw[line width=.04cm, color = black, draw opacity = 0.9] (-7.5,-5.0) -- (-2.5,-5.0);
		\draw[step=2.5, line width=.04cm, color = black, draw opacity = 0.9] (-10.0, -5.0) grid (-7.5,0);
		
		\draw[line width=.04cm, color = black, draw opacity = 0.9] (0,-3.5) rectangle (-3.5,-5.0);
		\draw[line width=.04cm, color = black, draw opacity = 0.9] (-3.5,-3.5) rectangle (-6.5,-5);
		\draw[line width=.04cm, color = black, draw opacity = 0.9] (-3.5,0) rectangle (-6.5,-3.5);
		\draw[line width=.04cm, color = black, draw opacity = 0.9] (-6.5,0) rectangle (-10.0,-3.5);
		\draw[line width=.04cm, color = black, draw opacity = 0.9] (-6.5,-3.5) rectangle (-10.0,-5.0);
		
		\draw[fill = green!5] (0,0) rectangle (-2.5,-2.5);
		\draw[fill = green!5] (0,-2.5) rectangle (-2.5,-3.5);
		\draw[fill = green!5] (0,-3.5) rectangle (-2.5,-3.5);
		\draw[fill = green!5] (0,-3.5) rectangle (-2.5,-5.0);
		\draw[fill = green!5] (-2.5,0) rectangle (-3.5,-2.5);
		\draw[fill = green!5] (-2.5,-2.5) rectangle (-3.5,-3.5);
		\draw[fill = green!5] (-2.5,0) rectangle (-3.5,-2.5);
		\draw[fill = green!5] (-2.5,-3.5) rectangle (-3.5,-5.0);
		\draw[fill = green!5] (-3.5,0.0) rectangle (-6.5,-2.5);
		\draw[fill = green!5] (-6.5,0.0) rectangle (-7.5,-2.5);
		\draw[fill = green!5] (-7.5,0.0) rectangle (-7.5,-2.5);
		\draw[fill = green!5] (-7.5,0.0) rectangle (-10.0,-2.5);
		\draw[fill = green!5] (-3.5,-2.5) rectangle (-6.5,-3.5);
		\draw[fill = green!5] (-6.5,-2.5) rectangle (-7.5,-3.5);
		\draw[fill = green!5] (-7.5,-2.5) rectangle (-10.0,-3.5);
		\draw[fill = green!5] (-3.5,-3.5) rectangle (-6.5,-5.0);
		\draw[fill = green!5] (-6.5,-3.5) rectangle (-7.5,-5.0);
		\draw[fill = green!5] (-7.5,-3.5) rectangle (-10.0,-5.0);
		
		\node[color = black] at (-8.9, -0.8) {$\hat{S}_1$};
		\node[color = black] at (-7.0, -1.2) {$\hat{S}_2$};
		\node[color = black] at (-5.0, -1.2) {$\hat{S}_3$};
		\node[color = black] at (-3.0, -1.2) {$\hat{S}_4$};
		\node[color = black] at (-1.3, -1.2) {$\hat{S}_5$};
		\node[color = black] at (-1.3, -2.95) {$\hat{S}_6$};
		\node[color = black] at (-3.0, -2.95) {$\hat{S}_7$};
		\node[color = black] at (-5.0, -2.95) {$\hat{S}_8$};
		\node[color = black] at (-7.0, -2.95) {$\hat{S}_{9}$};
		\node[color = black] at (-8.7, -2.95) {$\hat{S}_{10}$};
		\node[color = black] at (-8.7, -4.20) {$\hat{S}_{11}$};
		\node[color = black] at (-7.0, -4.20) {$\hat{S}_{12}$};
		\node[color = black] at (-5.0, -4.20) {$\hat{S}_{13}$};
		\node[color = black] at (-3.0, -4.20) {$\hat{S}_{14}$};
		\node[color = black] at (-1.3, -4.20) {$\hat{S}_{15}$};
		
		\draw[color=red,very thick,<->] (-7.5,0) -- (-10.0,-2.5);
		\node[color = red] at (-8.8, -2.10) {$\hat{d}_{\text{max}}$};
		
		\end{tikzpicture}
		\caption{The resulting compliant cell decomposition $\hat{S} =  \{ \hat{S}_{\hat{l}} \}_{\hat{l} \in \hat{\mathbb{I}} = \{1, \ldots, 15\}}$ of the Example \ref{ex: example_03} which serves as solution to Problem 1.}
		\label{fig:example_04}
	\end{figure}
\end{example}

\subsubsection{Discrete System Abstraction} \label{sec:discrete_system_abstraction}

For the solution to Problem \ref{problem: basic_prob}, the WTS of this agent which corresponds to the cell decomposition $\hat{S}$ with diameter $\hat{d}_{\text{max}}$ and the time step $\delta t$ will be exploited. Thus, the WTS of each agent is defined as follows:
\begin{definition} \label{def: indiv_WTS}
	The motion of each agent $i \in \mathcal{I}$ in the workspace is modeled by the WTS $\ \mathcal{T}_i = (S_i, S_i^{\text{init}}, Act_i, \longrightarrow_i, d_i, AP_i, \hat L_i)$ where:
	\begin{itemize}
		\item $S_i = \hat{\mathbb{I}}$ is the set of states of each agent which is the set of indices of the cell decomposition.
		\item $S_i^{\text{init}} \subseteq S_i$ is a set of initial states defined by the agents' initial positions in the workspace.
		\item $Act_i = {\hat{\mathbb{I}}}^{N_i+1}$, the set of actions representing where agent $i$ and its neighbors are located.
		\item For a pair $(l_i, {\bf{l}_i}, l'_i)$ we have that $(l_i, {\bf{l}_i}, l'_i) \in \longrightarrow_i$ iff $l_i \overset{\bf{l}_i}{\longrightarrow_i} l_i'$ is well-posed for each $l_i, l'_i \in S_i$ and ${\bf{l}_i} = (l_i, l_{j_1}, \ldots, l_{j_{N_i}}) \in Act_i$.
		\item $d_i: \longrightarrow_i \rightarrow \mathbb{T}$, is a map that assigns a positive weight (duration) to each transition. The duration of each transition is exactly equal to $\delta t > 0$.
		\item $\AP_i = \Sigma_i$, is the set of atomic propositions which are inherent properties of the workspace.
		\item $L_i: S_i \rightarrow 2^{AP_i}$, is the labeling function that maps every state $s \in S_i$ into the services that can be provided in this state.
	\end{itemize}
\end{definition}

The individual WTSs of the agents will allow us to work completely in the discrete level and design sequences of controllers that solve Problem \ref{problem: basic_prob}.

Every WTS $\mathcal{T}_i, i \in \mathcal{I}$ generates timed runs and timed words of the form $r_i^t = (r_i(0)$, $\tau_i(0))(r_i(1)$, $\tau_i(1))(r_i(2)$, $\tau_i(2)) \ldots$, $w_i^t = (L_i(r_i(0))$, $\tau_i(0))(L_i(r_i(1))$, $\tau_i(1))(L_i(r_i(2))$, $\tau_i(2)) \ldots$ respectively, over the set $2^{AP_i}$ according to Def. \ref{run_of_WTS} with $\tau_i(j) = j \delta t, \forall \ j \ge 0$. The relation between the timed words that are generated by the WTSs $\mathcal{T}_i, i \in \mathcal{I}$ with the timed service words produced by the trajectories $x_i(t), i \in \mathcal{I}, t \ge 0$ is provided through the following remark:

\begin{remark} \label{lemma:compliant_WTS_runs_with_trajectories}
	By construction, each timed word produced by the WTS $\mathcal{T}_i$ is a service timed word associated with the trajectory $x_i(t)$ of the system \eqref{eq: system}. Hence, if we find a timed word of $\mathcal{T}_i$ satisfying a formula $\varphi_i$ given in MITL, we also find for each agent $i$ a desired timed word of the original system, and hence trajectories $x_i(t)$ that are a solution to the Problem \ref{problem: basic_prob}. Therefore, the produced timed words of $\mathcal{T}_i$ are compliant with the service timed words of the trajectories $x_i(t)$.
\end{remark}

\subsection{Runs Consistency} \label{sec: runs consistency}

Due to the coupled dynamics between the agents, it is required that each individual agent's run is compliant with the corresponding discrete trajectories of its neighbors, which determine the actions in the agent's run. Therefore, even though we have the individual WTS of each agent, the runs that the latter generates may not be performed by an agent due to the constrained motion that is imposed by the coupling terms. Hence, we need to synchronize the agents at each time step $\delta t$ and determine which of the generated runs of the individual WTS can be performed by the agent. Hereafter, they will be called \textit{consistent runs}. In order to address the aforementioned issue, we provide a centralized product WTS which captures the behavior of the coupled multi-agent system as a team, and the generated product run (see Def. \ref{def: consistent_runs}) can later be projected onto consistent individual runs. The following two definitions deal with the product WTS and consistent runs respectively.

\begin{definition} \label{def:product_TS}
	Given the individual WTSs $\mathcal{T}_i, i \in \mathcal{I}$ from Def. \ref{def: indiv_WTS}, the product WTS $\mathcal{T}_p = (S_p, S_p^{\text{init}}, \longrightarrow_p, L_p)$ is defined as follows:
	\begin{itemize}
		\item $S_p = \hat{\mathbb{I}}^N$;
		\item $(s_1, \ldots, s_N) \in S^{\text{init}}$ if $s_i \in S_i^{\text{init}}, \forall \ i \in \mathcal{I}$;
		\item $(\bf{l}, \bf{l}') \in \longrightarrow_p$ iff $l_i' \in \text{Post}_i(l_i, \text{pr}_i(\bf{l})), \forall \ i \in \mathcal{I}, \forall \ \bf{l} = (l_1, \ldots, l_N), \bf{l}' = (l'_1, \ldots, l'_N)$;
		\item $L_p:\hat{\mathbb I}^N\to 2^{\cup_i^N\Sigma_i}$ defined as $L_p({\bf l})=\cup_{i=1}^NL_i(l_i)$;
		\item $d_p: \longrightarrow_p \rightarrow \mathbb{T}$: as in the individual WTS's case, with transition weight $d_p(\cdot) = \delta t$.
	\end{itemize}
\end{definition}

\begin{definition} \label{def: consistent_runs}
	Given a timed run:
	\begin{align*}
	&r_p^t = ((r_p^1(0), \ldots , r_p^N(0)),\tau_p(0)) \noindent ((r_p^1(1),\ldots,r_p^N(1)),\tau_p(1))\ldots, \notag
	\end{align*}
	that is generated by the product WTS $\mathcal T_p$, the induced set of projected runs $$\{r_i^t = (r_p^i(0), \tau_p(0))(r_p^i(1), \tau_p(1)) \ldots  : i \in \mathcal I\},$$ of the WTSs $\mathcal T_1,\ldots, \mathcal T_N$, respectively will be called \textit{consistent runs}. Since the duration of each agent's transition is $\delta t$ it holds that $\tau_p(j) = j \delta t, j \geq 0$.
\end{definition}

Therefore, through the product WTS $\mathcal{T}_p$, we can always generate individual consistent runs for each agent. It remains to provide a systematic approach of how to determine consistent runs $\widetilde{r}_1, \ldots, \widetilde{r}_N$  which are associated with the corresponding time serviced words $\widetilde{w}_1^t, \ldots, \widetilde{w}_N^t$. Note that we use the tilde accent to denote timed runs and words that correspond to the problem solution. The corresponding compliant trajectories $x_1(t), \ldots, x_N(t)$ of the timed words $\widetilde{w}_1^t, \ldots, \widetilde{w}_N^t$ satisfy the corresponding MITL formulas $\varphi_1, \ldots, \varphi_N$, and they are a solution to Problem \ref{problem: basic_prob}. This follows from the fact that the product transition system is simulated by the $\delta t$-sampled version of the continuous system (see \cite{tabuada_book_verification} for the definition of a simulation relation). In particular, let $\mathcal{T}_{\delta t}$ be the $\delta t$-sampled WTS of system (1), as defined in \cite[Def. 11.4]{tabuada_book_verification}, with labeling function $L_{\delta t}:\mathbb R^{Nn}\to 2^{\cup_{i=1}^N\Sigma_i}$ given as $L_{\delta t}(x_1,\ldots,x_N)=\cup_{i=1}^N \Lambda_i(x_i)$ and $\Lambda_i$ as defined in Section 2. Consider also the WTS $\mathcal{T}_{p}$ and the relation $\mathcal{R} \subseteq S_{p} \times \mathcal{X}^N$ given as $({\bf l},(x_1,\ldots,x_N))\in\mathcal{R}$, iff $(x_1,\ldots,x_N)\in S_{l_1}\times\cdots\times S_{l_N}$, where ${\bf l}=(l_1,\ldots,l_N)$. Then,  from the definition of the agent's individual transitions in each WTS $\mathcal T_i$ and the fact that for all points in a cell the same atomic proposition hold true, it can be deduced that $\mathcal{R}$ is a simulation relation from $\mathcal{T}_{p}$ to the $\delta t$-sampled WTS $\mathcal{T}_{\delta t}$.

\begin{remark}
	We chose to utilize decentralized abstractions, to generate the individual WTSs $\mathcal{I}, i \in \mathcal{I}$ for each agent and to compute the synchronized-centralized product WTS $\mathcal{T}_p$ for the following reasons: 
	\begin{enumerate}
		\item The state space of the centralized system to be abstracted is $\mathcal{X}^N \subseteq \mathbb{R}^{Nn}$, which is harder to visualize and handle as well as not naturally related to the individual specifications. Thus, it is more “natural” to define the specifications through the individual transition system of each agent corresponding to a discretization of $\mathcal{X}$ and then generate the product in order to obtain potential consistent satisfying plans.
		\item Additionally, many centralized abstraction frameworks are based on approximations of the system's reachable sets from a given cell
		over the transition time interval. These frameworks, require in the general nonlinear case, global dynamics properties and may avoid taking into account the finer dynamics properties of the individual entities, which can lead to more conservative estimates for large scale systems. 
	\end{enumerate}
\end{remark}

We provide here an example that explains the notation that has been introduced until now.
\begin{example} \label{ex: example_2}
	\begin{figure}[t!]
		\centering
		\begin{tikzpicture}[scale = 0.76]
		
		\filldraw[fill=yellow!40, line width=.04cm] (0, 0) rectangle +(3, 3.0);
		\filldraw[fill=orange!40, line width=.04cm] (-7.5, -3) rectangle +(3, 3.0);
		\filldraw[fill=red!20, line width=.04cm] (-4.5, 0) rectangle +(4.5, 3.0);
		\filldraw[fill=black!10, line width=.04cm] (-7.5, 0) rectangle +(3, 3.0);
		\filldraw[fill=blue!20, line width=.04cm] (0, -3) rectangle +(3, 3.0);
		\filldraw[fill=green!40, line width=.04cm] (-4.5, -3) rectangle +(4.5, 3);
		
		\draw[step=1.5, line width=.04cm] (-7.5, -3) grid (3,3);
		

		\draw (-6.8,-0.6) node[circle, inner sep=0.8pt, fill=black, label={below:{$i$}}] (A) {};
		\draw (-3.8, 0.7) node[circle, inner sep=0.8pt, fill=black, label={below:{$ $}}] (B) {};
		\draw (-0.8, -0.7) node[circle, inner sep=0.8pt, fill=black, label={below:{$ $}}] (C) {};
		\draw (0.8, 0.7) node[circle, inner sep=0.8pt, fill=black, label={below:{$ $}}] (D) {};
		\draw (-5.3, -2.1) node[circle, inner sep=0.8pt, fill=black, label={below:{$j_2$}}] (E) {};
		\draw (-6.8, 2.3) node[circle, inner sep=0.8pt, fill=black, label={below:{$ $}}] (F) {};
		\draw (-5.0, 2.2) node[circle, inner sep=0.8pt, fill=black, label={below:{$ $}}] (G) {};
		\draw (-0.8, 2.2) node[circle, inner sep=0.8pt, fill=black, label={below:{$ $}}] (H) {};
		\draw (2.2, 2.2) node[circle, inner sep=0.8pt, fill=black, label={below:{$ $}}] (I) {};
		\draw (-3.8, -0.85) node[circle, inner sep=0.8pt, fill=black, label={below:{$ $}}] (K) {};
		\draw (-0.8, -2.30) node[circle, inner sep=0.8pt, fill=black, label={below:{$ $}}] (L) {};
		\draw (0.75, -0.8) node[circle, inner sep=0.8pt, fill=black, label={below:{$ $}}] (M) {};
		
		
		\node at (-6.8, 2.69) {$j_1$};

		
		\draw[->, red, line width=.04cm] (A) to [bend left=35] (B);
		\draw[->, red, line width=.04cm] (B) to [bend left=-35] (C);
		\draw[->, red, line width=.04cm] (C) to [bend left=35] (D);
		\draw[->, red, line width=.04cm] (F) to [bend left=15] (G);
		\draw[->, red, line width=.04cm] (G) to [bend left=15] (H);
		\draw[->, red, line width=.04cm] (H) to [bend right=15] (I);
		\draw[->, red, line width=.04cm] (E) to [bend right=15] (K);
		\draw[->, red, line width=.04cm] (K) to [bend right=15] (L);
		\draw[->, red, line width=.04cm] (L) to [bend right=15] (M);
		
		
		\draw[color = black,very thick,<->] (1.5,-3) -- (3,-1.5);
		\coordinate [label=below right:$\bf{d_{\max}}$] (N) at (2.2,-2);
		
		\node at (1.8, 2.7) {$22$};
		\node at (1.9, -0.3) {$8$};
		\node at (-7.2, 1.20) {$15$};
		\node at (-7.2, -1.9) {$1$};
		
		\node [red!40] at (-2.3, 3.5) {$S_5$};
		\node [yellow!100] at (1.7, 3.4) {$S_4$};
		\node [blue!20] at (1.7, -3.7) {$S_3$};
		\node [green!70] at (-2.3, -3.7) {$S_2$};
		\node [orange!60] at (-6.3, -3.7) {$S_1$};
		\node [black!30] at (-6, 3.5) {$S_6$};
		
		
		\node [red] at (-5.6, 0.95) {$\delta t$};
		
		\draw [dashed, black] (A) -- (E);
		\draw [dashed, black] (A) -- (F);
		\draw [dashed, black] (B) -- (G);
		\draw [dashed, black] (C) -- (H);
		\draw [dashed, black] (D) -- (I);
		\draw [dashed, black] (K) -- (B);
		\draw [dashed, black] (L) -- (C);
		\draw [dashed, black] (M) -- (D);
		\end{tikzpicture}
		\caption{Timed runs of the agents $i, j_1, j_2$}
		\label{fig: example_1}
	\end{figure}
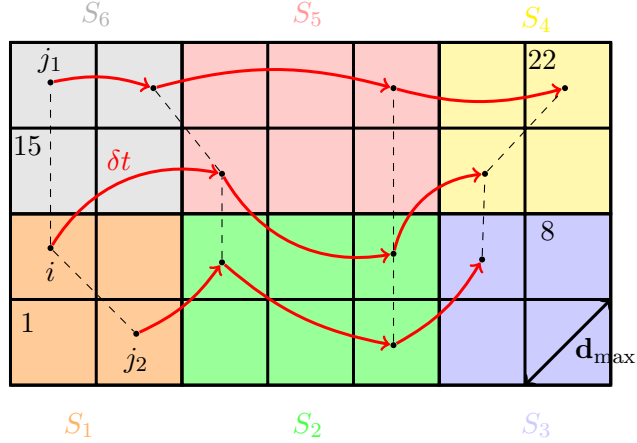
	
	Consider $N = 3$ agents performing in the workspace with $\mathcal{N}(i) = \{1,2\}$ as depicted in Figure \ref{fig: example_1}. $S = \{S_\ell\}_{\ell \in \mathbb{I} = \{1, \ldots, 6\}}$ is the given cell decomposition from Problem \ref{problem: basic_prob} and ${S} = \{\bar{S}_{l}\}_{l \in \mathbb{I} = \{1, \ldots, 28\}}$ is the cell decomposition which is the outcome of the proposed abstraction technique. Let the atomic propositions be $\{p_1,\ldots,p_6\} = \{\rm orange, $ $green, blue, yellow, red, grey\}$. The red arrows represent both the transitions of the agent $i$ and its neighbors. The dashed lines indicate the edges in the network graph. For the atomic propositions we have that $L_i(14) = \{p_1\}, L_i(17) = \{p_5\}, L_i(10) = \{p_2\}, L_i(20) = \{p_4\}, L_{j_1}(28) = \{p_6\} = L_{j_1}(27), L_{j_1}(24) = \{p_5\}, L_{j_1}(22) = \{p_4\}, L_{j_2}(2) = \{p_1\}, L_{j_2}(12) = \{p_2\} = L_{j_2}(5), L_{j_2}(9) = \{p_3\}$. Note also the diameter of the cells $\hat{d}_{\text{max}} = \bar{d}_{\text{max}}$. For the cell configurations we have:
	\begin{align*}
	Init \ (t = 0)&:
	\begin{cases}
	{\bf{l}_i} = (14, 28, 2), \\
	{\bf{l}_{j_1}} = (28, 14), \\
	{\bf{l}_{j_2}} = (2, 14), \\
	\end{cases} 
	Step \ 1 \ (t = \delta t) :
	\begin{cases}
	{\bf{l}_i} = (17, 27, 13), \\
	{\bf{l}_{j_1}} = (27, 17), \\
	{\bf{l}_{j_2}} = (13, 17), \\
	\end{cases} \notag \\
	Step \ 2 \ (t = 2 \delta t)&:
	\begin{cases}
	{\bf{l}_i} = (10, 24, 5), \\
	{\bf{l}_{j_1}} = (24, 10), \\
	{\bf{l}_{j_2}} = (5, 10), \\
	\end{cases} 
	Step \ 3 \ (t = 3 \delta t) :
	\begin{cases}
	{\bf{l}_i} = (20, 22, 9), \\
	{\bf{l}_{j_1}} = (22, 20), \\
	{\bf{l}_{j_2}} = (9, 20), \\
	\end{cases} \notag
	\end{align*}
	which are actions to the corresponding transitions. Three consistent timed runs are given as:
	\begin{align}
	r_i^t & = (r_i(0) = 14, \tau_i(0) = 0) (r_i(1) = 17, \tau_i(1) = \delta t)(r_i(2) = 10, \tau_i(2) = 2 \delta t) \notag \\ 
	& \hspace{60mm} (r_i(3) = 20, \tau_i(3) = 3 \delta t), \notag \\
	r_{j_1}^t & = (r_{j_1}(0) = 28, \tau_{j_1}(0) = 0) (r_{j_1}(1) = 27, \tau_{j_1}(1) = \delta t) (r_{j_1}(2) = 24, \tau_{j_1}(2) = 2 \delta t) \notag \\ 
	& \hspace{60mm} (r_{j_1}(3) = 22, \tau_{j_1}(3) = 3\delta t), \notag \\
	r_{j_2}^t & = (r_{j_2}(0) = 2, \tau_{j_2}(0) = 0) (r_{j_2}(1) = 13, \tau_{j_2}(1) = \delta t)(r_{j_2}(2) = 5, \tau_{j_2}(2) = 2 \delta t) \notag \\ 
	& \hspace{60mm} (r_{j_2}(3) = 9, \tau_{j_2}(3) = 3\delta t). \notag
	\end{align}
	It  can be observed that $r_i^t \models (\varphi_i = \Diamond_{[0, 6]}\{yellow\})$ if $3 \delta t \in [0, 6]$, $r_{j_1}^t \models (\varphi_{j1} = \Diamond_{[3, 10]}\{red\})$ if $2 \delta t \in [3, 10]$ and $r_{j_2}^t \models (\varphi_{j2} = \Diamond_{[3, 9]}\{blue\})$ if $3 \delta t \in [3, 9]$. 
\end{example}

\subsection{Controller Synthesis} \label{sec: synthesis}

The proposed controller synthesis procedure is described with the following steps:

\begin{enumerate}
	\item $N$ TBAs $\mathcal{A}_i, \ i \in \mathcal{I}$ that accept all the timed runs satisfying the corresponding specification formulas $\varphi_i, i \in \mathcal{I}$ are constructed.
	\item A B\"uchi WTS $\tilde{\mathcal T}_i = \mathcal{T}_i \otimes \mathcal{A}_i$ (see Def. \ref{def: buchi_WTS} below) for every $i \in \mathcal{I}$ is constructed. The accepting runs of $\tilde{\mathcal T}_i$, computed using standard graph search algorithms, are the individual runs of the $\mathcal{T}_i$ that satisfy the corresponding MITL formula $\varphi_i, \ i \in \mathcal{I}$.
	\item We pick a set of accepting runs $\{\widetilde{r}^t_1, \ldots, \widetilde{r}^t_N\}$ from Step~2. We check if they are consistent according to Def. \ref{def: consistent_runs}. If this is true then we proceed with Step 5. If this is not true then we repeat Step 3 with a different set of accepting runs. At worst case, we perform a finite predefined number of selections $R_{\text{selec}}$; if a consistent set of accepting runs is not found, we proceed with the less efficient centralized procedure in Step 4, which however searches through all sets of all possible accepting runs.
	\item We create the product $\widetilde{\mathcal{T}}_p = \mathcal{T}_p \otimes \mathcal{A}_p$ where $\mathcal{A}_p$ is the TBA that accepts all the words that satisfy the formula $\varphi = \varphi_1 \wedge \ldots \wedge \varphi_N$. An accepting run $\widetilde{r}_p$ of the product is projected into the accepting runs $\{\widetilde{r}_1, \ldots, \widetilde{r}_N\}$. If there is no accepting run found in  $\mathcal{T}_p \otimes \mathcal{A}_p$, then Problem \ref{problem: basic_prob} has no solution.	
	\item The abstraction procedure allows to find an explicit feedback law for each
	transition in $\mathcal T_i$. Therefore,
	an accepting run $\widetilde{r}^t_i$ in $\mathcal T_i$ that takes the form of a sequence of transitions is realized in the system in \eqref{eq: system} via the corresponding sequence of feedback laws.
\end{enumerate}

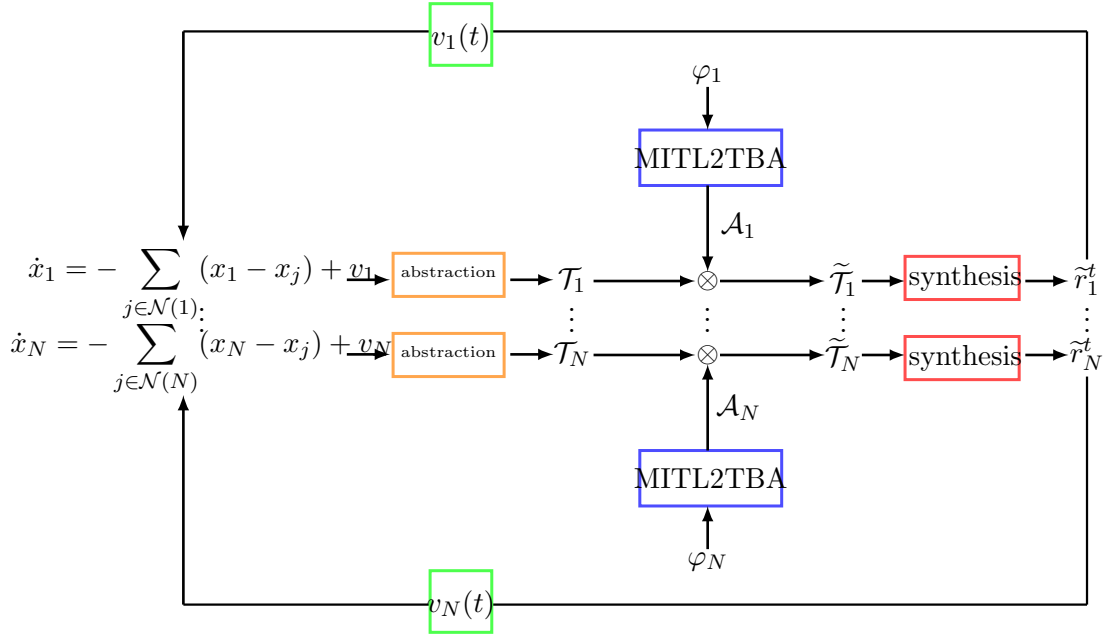
\begin{figure*}[t!]
	\centering
	\begin{tikzpicture}[scale = 0.82] 
	
	\draw[blue!70, line width=.04cm] (-15.6, 6.0) rectangle +(2.3, 0.9);
	\node at (-14.45, 6.45) {$\text{MITL2TBA}$};
	\draw[blue!70, line width=.04cm] (-15.6, 0.80) rectangle +(2.3, 0.9);
	\node at (-14.45, 1.25) {$\text{MITL2TBA}$};
	
	\draw[-latex, draw=black, line width = 1.0] (-14.5,6.0) -- (-14.5,4.6);
	\draw[-latex, draw=black, line width = 1.0] (-14.5,7.6) -- (-14.5,6.9);
	\draw[-latex, draw=black, line width = 1.0] (-14.5,0.1) -- (-14.5,0.8);
	\draw[-latex, draw=black, line width = 1.0] (-14.5,1.7) -- (-14.5,3.1);
	
	\node at (-14.5, 7.80) {$\varphi_1$};
	\node at (-14.5, -0.1) {$\varphi_N$};
	
	\node at (-14.0, 5.35) {$\mathcal{A}_1$};
	\node at (-14.0, 2.40) {$\mathcal{A}_N$};
	
	\node at (-14.5, 3.25) {$\otimes$};
	\node at (-14.5, 4.45) {$\otimes$};
	
	\node at (-14.5, 3.95) {$\vdots$};
	
	\draw[-latex, draw=black, line width = 1.0] (-14.30,4.45) -- (-12.6,4.45);
	\draw[-latex, draw=black, line width = 1.0] (-14.30,3.25) -- (-12.6,3.25);
	
	\node at (-12.3, 4.47) {$\widetilde{\mathcal{T}}_1$};
	\node at (-12.3, 3.95) {$\vdots$};
	\node at (-12.3, 3.26) {$\widetilde{\mathcal{T}}_N$};
	
	\draw[-latex, draw=black, line width = 1.0] (-16.35,4.45) -- (-14.7,4.45);
	\draw[-latex, draw=black, line width = 1.0] (-16.35,3.25) -- (-14.7,3.25);
	
	\node at (-16.7, 4.47) {$\mathcal{T}_1$};
	\node at (-16.7, 3.95) {$\vdots$};
	\node at (-16.7, 3.26) {$\mathcal{T}_N$};
	
	\draw[-latex, draw=black, line width = 1.0] (-12.00,4.45) -- (-11.3,4.45);
	\draw[-latex, draw=black, line width = 1.0] (-12.00,3.25) -- (-11.3,3.25);
	
	
	\draw[red!70, line width=.04cm] (-11.3, 2.85) rectangle +(1.85, 0.70);
	\draw[red!70, line width=.04cm] (-11.3, 4.15) rectangle +(1.85, 0.70);
	\node at (-10.32, 4.50) {$\text{synthesis}$};
	\node at (-10.32, 3.20) {$\text{synthesis}$};
	
	\draw[-latex, draw=black, line width = 1.0] (-9.35,4.45) -- (-8.65,4.45);
	\draw[-latex, draw=black, line width = 1.0] (-9.35,3.25) -- (-8.65,3.25);
	
	\node at (-8.35, 4.47) {$\widetilde{r}_1^t$};
	\node at (-8.35, 3.95) {$\vdots$};
	\node at (-8.35, 3.26) {$\widetilde{r}_N^t$};
	
	\draw[-latex, draw=black, line width = 1.0] (-17.70,4.47) -- (-17.00,4.47);
	\draw[-latex, draw=black, line width = 1.0] (-17.70,3.26) -- (-17.00,3.26);
	
	\draw[orange!70, line width=.04cm] (-19.6, 2.90) rectangle +(1.8, 0.7);
	\draw[orange!70, line width=.04cm] (-19.6, 4.20) rectangle +(1.8, 0.7);
	\node at (-18.70, 4.60) {$\tiny \text{abstraction}$};
	\node at (-18.70, 3.30) {$\tiny \text{abstraction}$};
	
	\draw[-latex, draw=black, line width = 1.0] (-20.35,4.47) -- (-19.65,4.47);
	\draw[-latex, draw=black, line width = 1.0] (-20.35,3.26) -- (-19.65,3.26);
	
	
	\node at (-22.70, 4.47) {$\displaystyle \dot{x}_1 = -\sum_{j \in \mathcal{N}(1)}^{} (x_1 - x_j)+v_{1}$};
	\node at (-22.70, 3.95) {$\vdots$};
	\node at (-22.70, 3.26) {$\displaystyle \dot{x}_N = -\sum_{j \in \mathcal{N}(N)}^{} (x_N - x_j)+v_{N}$};
	
	\draw [black, line width = 0.030cm] (-8.35, 4.80) -- (-8.35, 8.50);
	\draw [black, line width = 0.030cm] (-8.35, 8.50) -- (-18.00, 8.50);
	\draw [black, line width = 0.030cm] (-19.00, 8.50) -- (-23.00, 8.50);
	\draw[-latex, draw=black, line width = 1.0] (-23.00, 8.50) -- (-23.00, 5.1);
	
	\draw [black, line width = 0.030cm] (-8.35, 2.90) -- (-8.35, -0.80);
	\draw [black, line width = 0.030cm] (-8.35, -0.80) -- (-18.00, -0.80);
	\draw [black, line width = 0.030cm] (-19.00, -0.8) -- (-23.00, -0.8);
	\draw[-latex, draw=black, line width = 1.0] (-23.00, -0.8) -- (-23.00, 2.6);
	
	\draw[green!70, line width=.04cm] (-19.0, 7.95) rectangle +(1.0, 1.0);
	\node at (-18.5, 8.40) {$v_1(t)$};
	\draw[green!70, line width=.04cm] (-19.0, -1.25) rectangle +(1.0, 1.0);
	\node at (-18.5, -0.80) {$v_N(t)$};
	\end{tikzpicture}
	\centering
	\caption{A graphic illustration of the proposed framework.}
	\label{fig:solution_scheme}
\end{figure*}

In order to construct the Buchi WTSs $\widetilde{\mathcal{T}}_p$ and $\widetilde{\mathcal{T}}_i, i \in \mathcal{I}$ that were presented in Steps 2 and 4, consider the following generic definition:
\begin{definition} \label{def: buchi_WTS}
	Given a WTS $\mathcal{T}_i =(S_i, S_{i}^{\text{init}}, Act_i, \longrightarrow_i, d_i, AP_i, L_i)$, and a TBA $\A_i = (Q_i,  Q^\text{init}_i, C_i, Inv_i, E_i, F_i, \\ AP_i, \mathcal{L}_i)$ with $|C_i|$ clocks and let $C^{\mathit{max}}_i$ be the largest constant appearing in $\A_i$. Then, their \textit{B\"uchi WTS} $\widetilde{\T}_i = \mathcal{T}_i \otimes \A_i = (\widetilde{S}_i, \widetilde{S}_{i}^{\init}, \widetilde{Act}_i, {\rightsquigarrow}_{i}, \widetilde{d}_i, \widetilde{F}_i, AP_i, \widetilde{L}_i)$ is defined as follows:
	\begin{itemize}
		\item {$\widetilde{S}_i \subseteq \{(s_i, q_i) \in S_i \times Q_i : {L}_i(s_i) = \mathcal{L}_i(q_i)\} \times \mathbb{T}_\infty^{|C_i|} $.}
		\item $\widetilde{S}_{i}^{\init} = S_i^{\init} \times Q_i^{\init} \times \{0\}^{|C_i|}$.
		\item $\widetilde{Act}_i = Act_i$.
		\item $(\widetilde{q}, act_i, \widetilde{q} ') \in {\rightsquigarrow}_i$ iff
		\begin{itemize}
			\item[$\circ$] $\widetilde{q} = (s, q, \nu_1, \ldots, \nu_{|C_i|}) \in \widetilde{S}_i$, \\ $\widetilde{q} ' = (s', q', \nu_1', \ldots, \nu_{|C_i|}') \in \widetilde{S}_i$,
			\item[$\circ$] $act_i \in Act_i$,
			\item[$\circ$] $(s, act_i, s') \in \longrightarrow_i$, and
			\item[$\circ$] there exists $\gamma, R$, such that $(q, \gamma, R, q') \in E_i$, $\nu_1,\ldots,\nu_{|C_i|} \models \gamma$, $\nu_1',\ldots,\nu_{|C_i|}' \models Inv_i(q')$, and for all $i \in \{1,\ldots, |C_i|\}$ it holds that:
			\begin{equation*}
			\nu_i' =
			\begin{cases}
			0,      & \text{if } c_i \in R \\
			\nu_i + d_i(s, s'), &  \text{if }  c_i \not \in R \text{ and } \\ & \nu_i + d_i(s, s') \leq C^{\mathit{max}}_i \\			\infty, & \text{otherwise}.
			\end{cases}
			\end{equation*}
		\end{itemize}
		Then, $\widetilde{d}_i(\widetilde{q}, \widetilde{q}') = d_i(s, s')$.
		\item $\widetilde{F}_i = \{(s_i, q_i,\nu_1,\ldots,\nu_{|C_i|}) \in Q_i : q_i \in F_i\}$.
		\item $\widetilde{L}_i(s_i, q_i, \nu_1, \ldots, \nu_{|C_i|}) = {L}_i(s_i)$.
	\end{itemize}
\end{definition}

The Buchi WTS $\widetilde{\mathcal{T}}_p$ is constructed in a similar way to Def. \ref{def: buchi_WTS} by using the product of $\mathcal{T}_p$ and $\mathcal{A}_p$. Each B\"uchi WTS $\widetilde{\mathcal{T}}_i, i \in \mathcal I$ is in fact a WTS with a B\"uchi acceptance condition $\widetilde{F}_i$. A timed run of $\widetilde{\mathcal{T}}_i$ can be written as $\widetilde{r}_i^t = (q_i(0), \tau_i(0))(q_i(1), \tau_i(1)) \ldots$ using the terminology of Def. \ref{run_of_WTS}. It is \textit{accepting} if $q_i(j) \in \widetilde F_i$ for infinitely many $i \geq 0$.
An accepting timed run of  $\widetilde{\mathcal{T}}_i$ projects onto a timed run of $\mathcal{T}_i$ that satisfies the local specification formula $\varphi_i$ by construction. Formally, the following lemma, whose proof follows directly from the construction and and the principles of automata-based LTL model checking (see, e.g., \cite{katoen}), holds:

\begin{lemma} \label{eq: lemma_1}
	Consider an accepting timed run $\widetilde{r}_i^t = (q_k(0), \tau_i(0))(q_i(1), \tau_i(1)) \ldots$ of the B\"uchi WTS $\widetilde{\mathcal{T}}_i$ defined above, where $q_i(j) = (r_i(j), s_i(j), \nu_{i, 1}, \ldots, \nu_{i, |C_i|})$ denotes a state of $\mathcal{\widetilde T}_i$, for all $j \geq 0$. The timed run $\widetilde{r}_i^t$ projects onto the timed run $r_i^t = (r_i(0), \tau_i(0))(r_i(1), \tau_i(1)) \ldots$, of the WTS $\mathcal{T}_i$ that produces the timed word $w(r_i^t) = (L_i(r_i(0))$, $\tau_i(0))(L_i(r_i(1))$, $\tau_i(1)) \ldots$ accepted by the TBA $\mathcal{A}_i$ via its run $\rho_i = s_i(0)s_i(1) \ldots$. Vice versa, if there exists a timed run $r_i^t = (r_i(0),\tau_i(0))(r_i(1),\tau_i(1))\ldots,$ of the WTS $\mathcal{T}_k$ that produces a timed word $w(r_i^t) = (L_i(r_i(0))$, $\tau_i(0))(L_i(r_i(1))$, $\tau_i(1)) \ldots$ accepted by the TBA $\mathcal{A}_i$ via its run $\rho_i = s_i(0)s_i(1)\ldots$ then there exists the accepting timed run $\widetilde{r}_i^t = (q_i(0)$, $\tau_i(0))(q_i(1)$, $\tau_i(1)) \ldots$ of $\widetilde{\mathcal{T}}_i$, where $q_i(j) = (r_i(j),s_i(j),\nu_{i,1}, \ldots, \nu_{i,|C_i|})$, in $\widetilde{\mathcal{T}}_i$.
\end{lemma}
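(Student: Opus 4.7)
The plan is to prove both directions of the biconditional by direct inspection of the product construction in Definition~\ref{def: buchi_WTS}, treating the two coordinates of $\widetilde{S}_i$ as independent bookkeeping that is synchronized only through the transition rule. This is essentially the classical automata-theoretic correctness argument for product constructions (as in standard LTL model checking, cf.\ \cite{katoen}), extended to the timed setting via the clock-valuation component.

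For the forward direction, given an accepting timed run $\widetilde{r}_i^t = (q_i(0),\tau_i(0))(q_i(1),\tau_i(1))\ldots$ with $q_i(j) = (r_i(j), s_i(j), \nu_{i,1}(j),\ldots,\nu_{i,|C_i|}(j))$, I would project onto the first and second coordinates. Projecting onto $r_i(j)$ yields a timed run of $\mathcal{T}_i$: the initial state lies in $S_i^{\text{init}}$ by the definition of $\widetilde{S}_i^{\text{init}}$, and the product transition rule directly supplies $(r_i(j),act_i(j),r_i(j+1)) \in \longrightarrow_i$ with $\widetilde{d}_i = d_i$, so the time stamps carry over verbatim. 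Projecting onto the $(s_i(j),\nu(j))$ coordinates yields a run $\rho_i$ of $\mathcal{A}_i$: the same product rule guarantees the existence of an edge $(s_i(j),\gamma,R,s_i(j+1)) \in E_i$ whose guard is satisfied by $\nu(j)$ and whose target invariant is satisfied by $\nu(j+1)$, with clock updates matching a delay of $d_i(r_i(j),r_i(j+1))$ followed by reset on $R$. The label-compatibility condition $L_i(s)=\mathcal{L}_i(q)$ built into $\widetilde{S}_i$ immediately forces the timed word $w(r_i^t)$ to coincide with the word produced by $\rho_i$, and since $q_i(j) \in \widetilde{F}_i$ infinitely often implies $s_i(j) \in F_i$ infinitely often, $\rho_i$ is an accepting run of $\mathcal{A}_i$.

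For the reverse direction, given a run $r_i^t$ of $\mathcal{T}_i$ and an accepting run $\rho_i = s_i(0)s_i(1)\ldots$ of $\mathcal{A}_i$ over $w(r_i^t)$, I would explicitly construct $\widetilde{r}_i^t$ by setting $q_i(j) := (r_i(j),s_i(j),\nu_{i,1}(j),\ldots,\nu_{i,|C_i|}(j))$, where the clock valuations $\nu(j)$ are those induced by $\rho_i$ (saturated to $\infty$ whenever they would exceed $C_i^{\max}$). I would then verify the three product-membership conditions: (i) $q_i(0) \in \widetilde{S}_i^{\text{init}}$ because $r_i(0)\in S_i^{\text{init}}$, $s_i(0)\in Q_i^{\text{init}}$ and all clocks start at $0$; (ii) each consecutive pair $(q_i(j),q_i(j+1))$ belongs to $\rightsquigarrow_i$, which follows by combining the WTS transition supplied by $r_i^t$ with the TBA edge supplied by $\rho_i$ and observing that guard, invariant and reset conditions are exactly those on which the product relation was predicated; and (iii) acceptance transfers because $s_i(j) \in F_i$ infinitely often forces $q_i(j) \in \widetilde{F}_i$ infinitely often by the definition of $\widetilde{F}_i$.

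The main obstacle, and in fact the only nontrivial point, is justifying that the clock-saturation rule $\nu'_i = \infty$ whenever $\nu_i + d_i(s,s') > C_i^{\max}$ does not break the equivalence. This is the standard region-automaton argument: once a clock exceeds the largest constant appearing in any guard of $\mathcal{A}_i$, its exact value can no longer affect the truth of any future guard $c \bowtie \psi$ with $\psi \leq C_i^{\max}$, so replacing it by $\infty$ preserves every subsequent guard-evaluation outcome and every invariant-satisfaction check. I would therefore insert a brief observation that guard and invariant satisfaction in $\mathcal{A}_i$ is invariant under this saturation, after which both directions reduce to the purely syntactic verifications sketched above.
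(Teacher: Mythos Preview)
Your proposal is correct and follows precisely the approach the paper intends: the paper does not give an explicit proof of this lemma but simply states that it ``follows directly from the construction and the principles of automata-based LTL model checking (see, e.g., \cite{katoen}),'' and your argument is exactly the unpacking of that claim---projecting the product run onto its two coordinates in one direction, zipping a WTS run and a TBA run together in the other, and handling the clock-saturation bookkeeping via the standard region-automaton observation. There is nothing to add or correct.
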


The proposed framework is depicted in Figure \ref{fig:solution_scheme}. The dynamics \eqref{eq: system} of each agent $i$ is abstracted into a WTS $\mathcal{T}_i$ (orange rectangles). Then the product between each WTS $\mathcal{T}_i$ and the $TBA$ $\mathcal{A}_i$ is computed according to Def. \ref{def: buchi_WTS}. The TBA $\mathcal{A}_i$ accepts all the words that satisfy the formula $\varphi_i$ (blue rectangles). For every B\"uchi WTS $\widetilde{\mathcal{T}}_i$ the controller synthesis procedure that was described in this Section (red rectangles) is performed and a sequence of accepted runs $\{\widetilde{r}_1^t, \dots, \widetilde{r}_N^t\}$ is designed. Every accepted run $\widetilde{r}_i^t$ maps into a decentralized controller $v_i(t)$ which is a solution to Problem \ref{problem: basic_prob}.

\begin{proposition}
	A solution obtained from Steps 1-5, gives a sequence of controllers $v_1, \dots, v_N$ that guarantees the satisfaction of the formulas $\varphi_1, \dots, \varphi_N$ of the agents $1, \dots, N$ respectively, governed by the dynamics as in \eqref{eq: system}, thus, they are a solution to Problem \ref{problem: basic_prob}. 	
\end{proposition}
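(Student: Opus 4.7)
The plan is to chain together the three layers of correctness that the paper has already developed, namely (i) the TBA-to-run correspondence for the individual MITL tasks, (ii) the consistency of the projected individual runs, and (iii) the realization of each discrete transition by a continuous feedback law. First I would take the output of Steps~1--4, that is, a family of accepting timed runs $\{\widetilde{r}_1^t,\ldots,\widetilde{r}_N^t\}$ of the B\"uchi WTSs $\widetilde{\mathcal{T}}_1,\ldots,\widetilde{\mathcal{T}}_N$, and invoke Lemma~\ref{eq: lemma_1} for each $i\in\mathcal{I}$: this projects $\widetilde{r}_i^t$ onto a timed run $r_i^t$ of $\mathcal{T}_i$ whose timed word is accepted by $\mathcal{A}_i$, hence satisfies $\varphi_i$ according to Definition~\ref{def:mitl_semantics}.

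Next I would verify that this family is consistent in the sense of Definition~\ref{def: consistent_runs}. If the runs come out of Step~3, consistency is checked explicitly by the algorithm and there is nothing more to argue. If instead Step~4 was triggered, I would exploit the definition of $\mathcal{T}_p$: every transition $(\mathbf{l},\mathbf{l}')\in\longrightarrow_p$ satisfies $l_i'\in\text{Post}_i(l_i,\text{pr}_i(\mathbf{l}))$ for all $i\in\mathcal I$, and the TBA $\mathcal A_p$ for $\varphi_1\wedge\cdots\wedge\varphi_N$ accepts a timed word iff each conjunct $\varphi_i$ is satisfied; applying Lemma~\ref{eq: lemma_1} to $\widetilde{\mathcal T}_p$ and projecting component-wise therefore yields consistent individual runs whose projected timed words still satisfy the respective $\varphi_i$. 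Completeness of Step~4 (if no accepting run of $\widetilde{\mathcal T}_p$ exists then Problem~\ref{problem: basic_prob} is infeasible) follows from the same projection together with the converse direction of Lemma~\ref{eq: lemma_1}.

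Then I would lift the consistent discrete plan to continuous trajectories of \eqref{eq: system} using the well-posed space-time discretization $(\hat{S},\delta t)$ built in Section~\ref{sec: abstration}. For each transition $l_i\xrightarrow{\mathbf{l}_i} l_i'$ appearing along $r_i^t$, well-posedness guarantees a feedback law in place of $v_i$, bounded by $v_{\max}$, that drives agent $i$ from anywhere in $\hat S_{l_i}$ into $\hat S_{l_i'}$ exactly at time $\delta t$, irrespective of where its neighbors start inside their own cells. Concatenating these feedback laws along $r_i^t$ produces the controllers $v_1,\ldots,v_N$ of Step~5. The simulation relation $\mathcal{R}$ from $\mathcal{T}_p$ to the $\delta t$-sampled WTS $\mathcal{T}_{\delta t}$, recalled right after Definition~\ref{def: consistent_runs}, ensures that the resulting closed-loop trajectory $(x_1(t),\ldots,x_N(t))$ visits the prescribed cells at the sampling instants $\tau_p(j)=j\delta t$.

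Finally, to close the loop at the level of specifications, I would use Assumption~\ref{assumption: AP_cell_decomposition} together with the fact that $\hat S$ refines $S$: the services held along $x_i(t)$ for $t\in[\tau_i(j),\tau_i(j+1))$ coincide with $\mathcal{L}_i(r_i(j))$, so the timed word $w(r_i^t)$ is a timed service word associated with $x_i(t)$ in the sense of \eqref{eq:time_serviced_word}, and, as it satisfies $\varphi_i$, we conclude $x_i(t)\models\varphi_i$ by the definition at the end of Section~\ref{sec: prob_formulation}. The main obstacle I anticipate is the case where Step~3 fails and Step~4 takes over: one has to be careful that the projection from an accepting run of the monolithic product $\widetilde{\mathcal T}_p$ onto the individual WTSs genuinely yields runs whose joint timed word is accepted by $\mathcal A_p$ (and therefore by each $\mathcal A_i$), which hinges on the synchronous transition semantics of $\mathcal{T}_p$ and the compatibility of the common time stamp $\tau_p(j)=j\delta t$ with the clock valuations of $\mathcal A_p$; the rest of the argument is a bookkeeping chain through Lemma~\ref{eq: lemma_1}, Remark~\ref{lemma:compliant_WTS_runs_with_trajectories}, and the well-posedness of the abstraction.
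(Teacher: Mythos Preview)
Your proposal is correct and follows exactly the line of argument the paper intends: the proposition is stated in the paper without an explicit proof, relying implicitly on Lemma~\ref{eq: lemma_1}, the simulation relation described after Definition~\ref{def: consistent_runs}, Remark~\ref{lemma:compliant_WTS_runs_with_trajectories}, and the well-posedness of the abstraction, which are precisely the ingredients you chain together. The only caveat is your completeness remark for Step~4: the converse direction of Lemma~\ref{eq: lemma_1} alone does not establish that infeasibility of $\widetilde{\mathcal T}_p$ implies infeasibility of Problem~\ref{problem: basic_prob} at the continuous level (that would require the abstraction to be bisimilar rather than merely simulated), but the paper asserts this without further justification as well, so your treatment is consistent with the source.
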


\subsection{Complexity} \label{sec:complexity}

Denote by $|\varphi|$ the length of an MITL formula $\varphi$. A TBA $\mathcal{A}_i, i \in \mathcal{I}$ can be constructed in space and time $2^{\mathcal{O}(|\varphi_i|)}, i \in \mathcal{I}$. Let $\varphi_{\text{max}} = \underset{i \in \mathcal{I}}{\text{max}} \{ |\varphi_i|\}$ be the MITL formula with the longest length. Then, the complexity of Step 1 is $N 2^{\mathcal{O}(|\varphi_{\text{max}})|}$. Step 2 costs $\mathcal{O}(N 2^{|\varphi_i|}|\mathcal{S}_i|)$, where $|\mathcal{S}_i| = |\hat{\mathbb{I}}|$ is the number of states of the WTS $\mathcal{T}_i$. We have the best case complexity as $\mathcal{O}(N R_{\text{selec}} 2^{|\varphi_{\text{max}}|} |\hat{\mathbb{I}}|)$, since the Step 3 is more efficient than Step 4. The worst case complexity of our proposed framework is when Step 4 is followed, which is $\mathcal{O}(2^{|\varphi_{\text{max}}|}|\hat{\mathbb{I}}|^N)$.

\clearpage 

\begin{figure}[t!]
	\centering
	\begin{subfigure}[b]{0.45\linewidth}
		\centering
		\includegraphics[scale = 0.5]{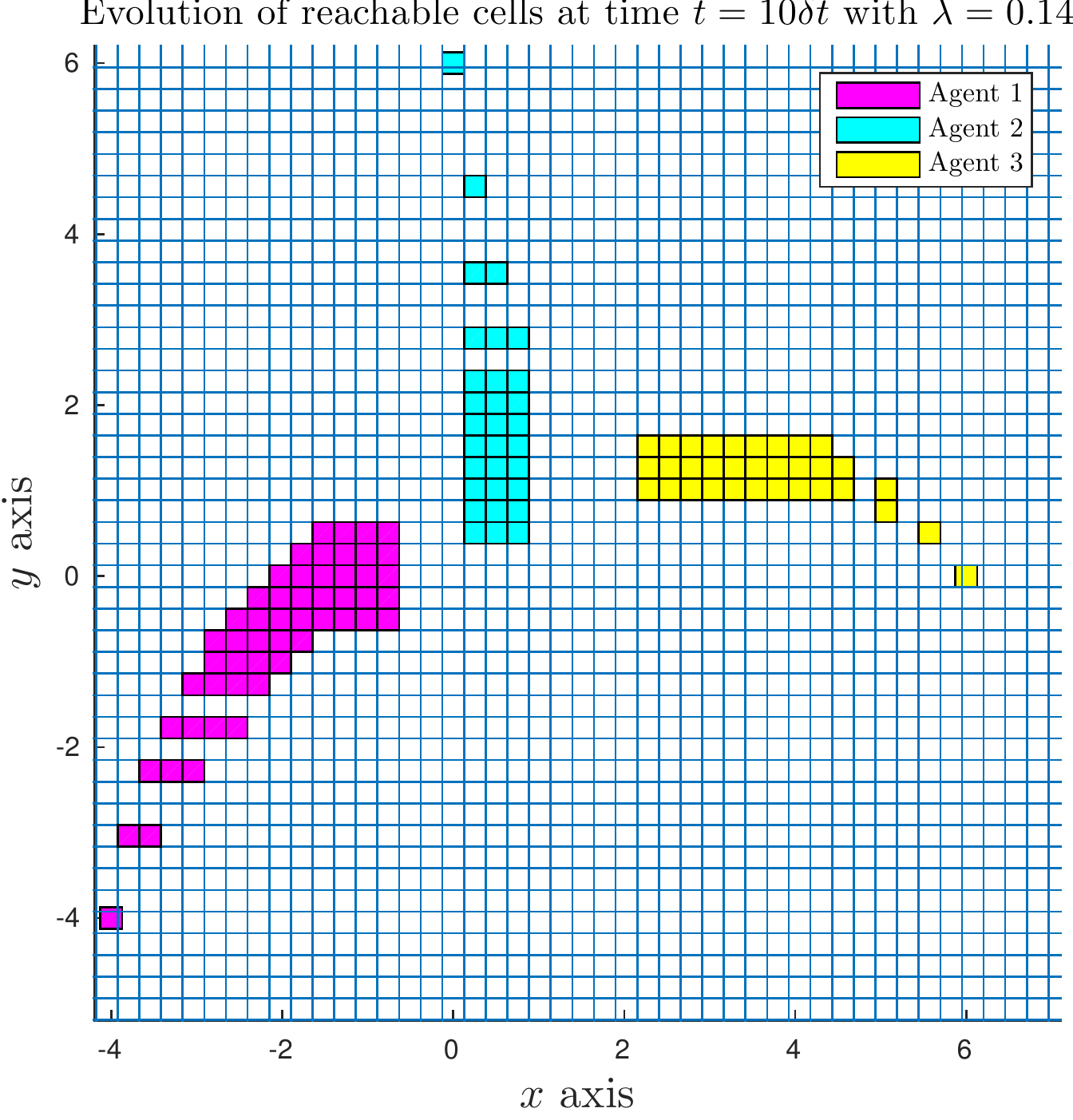}
		\caption{Scenario $1$ with $\lambda = 0.14$}\label{fig:1a}		
	\end{subfigure}
	\quad
	\begin{subfigure}[b]{0.45\linewidth}
		\centering
		\includegraphics[scale = 0.5]{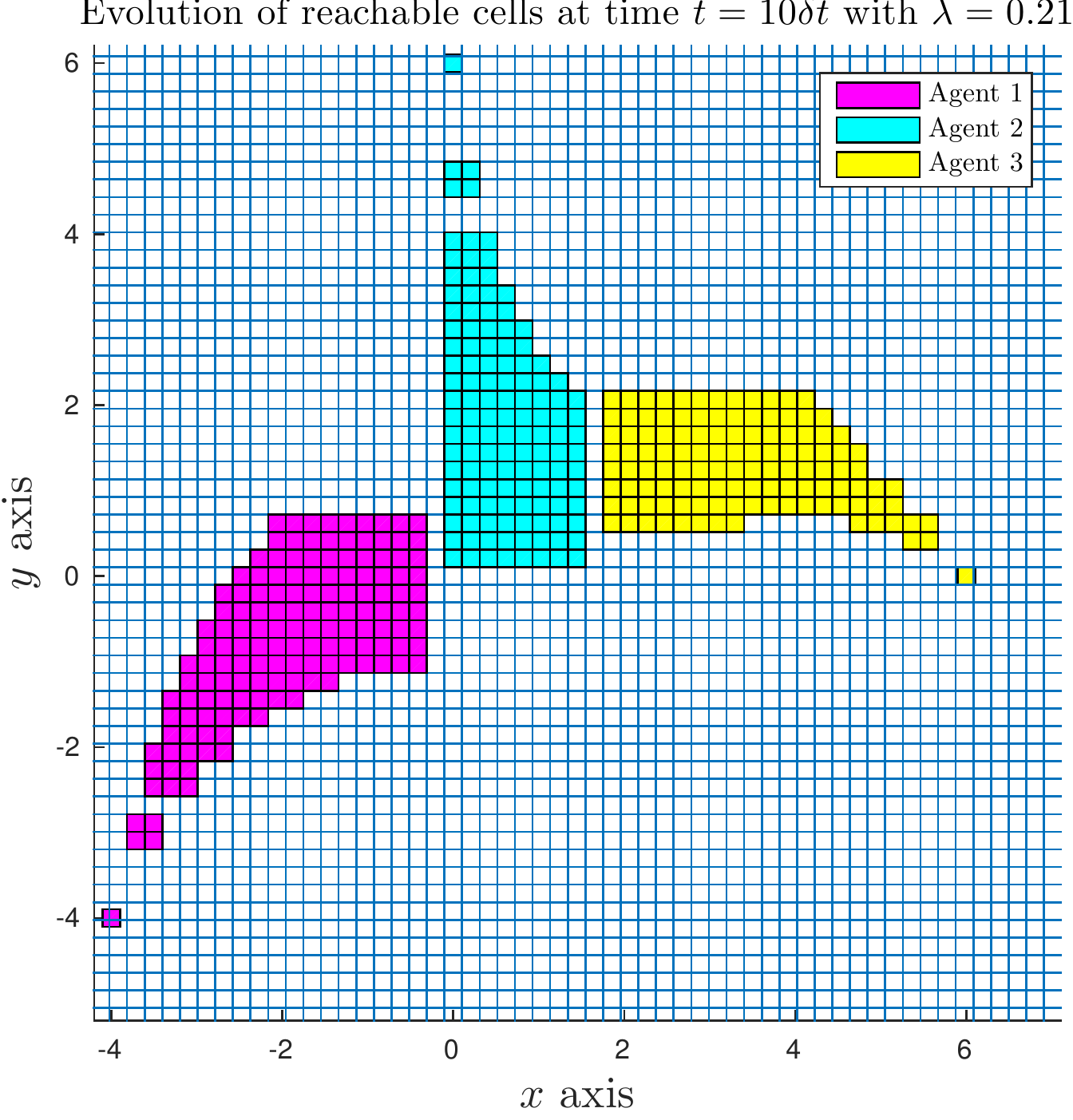}
		\caption{Scenario $2$ with $\lambda = 0.21$}\label{fig:1b}
	\end{subfigure}
	\caption{Two simulation scenarios with $N = 3$ agents and $\lambda = 0.14$, $\bar{d}_{\max} = 0.25$ (Figure \ref{fig:1a}) and $\lambda = 0.21$, $\bar{d}_{\max} = 0.20$ (Figure \ref{fig:1b})) in a time horizon of $t = 10 \delta t$.}\label{fig:simulation1}
\end{figure}

	\begin{table}[t!]
		\begin{center}
			\begin{tabular}{|C{0.9cm}||C{2.3cm}|C{1.6cm}||C{2.3cm}|C{1.6cm}|}
					\hline
					\multicolumn{5}{|c|}{$N = 3$ agents}\\
					\hline
					\multicolumn{3}{|c||}{$\lambda = 0.14$} & \multicolumn{2}{c|}{$\lambda = 0.21$}\\
					\hline \hline
					Step  & Reachable States & Time &  Reachable States & Time  \\
					\hline \hline
					$\delta t$  & $2$ & $0.13 \sec$  & $24$ & $0.15 \sec$\\
					\hline
					$2 \delta t$  & $12$ & $0.05 \sec$ & $235$ & $0.11 \sec$\\
					\hline
					$3 \delta t$  & $18$ & $0.05 \sec$ & $1006$ & $0.88 \sec$\\
					\hline
					$4 \delta t$  & $54$ & $1.0 \sec$ & $3104$ & $5.17 \sec$\\
					\hline
					$5 \delta t$  & $250$ & $0.90 \sec$ &  $7983$ & $20.80 \sec$\\
					\hline
					$6 \delta t$  & $250$ & $1.27 \sec$ &  $12545$ & $48.77 \sec$\\
					\hline
					$7 \delta t$  & $250$ & $1.36 \sec$ & $13493$ & $70.21 \sec$\\
					\hline
					$8 \delta t$  & $475$ & $1.31 \sec$ & $16078$ & $74.56 \sec$\\
					\hline
					$9 \delta t$  & $875$ & $2.21 \sec$ & $23690$ & $107.79 \sec$\\
					\hline
					$10 \delta t$  & $1100$ & $3.09 \sec$ & $33171$ & $185.91 \sec$\\
					\hline \hline
					$ $ & \multicolumn{2}{|c||}{Total Time: $11.41 \sec$} & \multicolumn{2}{c|}{Total Time: $514.37 \sec$}\\
					\hline
				\end{tabular}
			\end{center}
			\caption{This table shows the simulation statistics of an example with $N = 3$ agents whose reachable cells are depicted in Figure \ref{fig:simulation1}. The first column is the number of the time step (recall that the duration of each transition in the WTS is exactly $\delta t$). The other columns show for $\lambda = 0.14$ and $\lambda = 0.21$, the number of the reachable states generated by the product WTS $\mathcal{T}_p$, as given in Definition $8$ of the revised version, as well as the required computational time. The total time stands for the computation of the reachable states.}
			\label{table1}
		\end{table}

\section{Simulation Results} \label{sec: simulation_results}

In order to show how the proposed framework scales with respect to the number of agents and the time solution horizon, we consider two simulation examples with two simulation scenarios each.

\textbf{Simulation Example 1 :}  Consider a system of three agents with $x_i \in \mathbb{R}^2, \ i \in \mathcal{I} = \{1, 2, 3\}, \mathcal{N}(1) = \{2\} = \mathcal{N}(3), \mathcal{N}(2) = \{1, 3\}$ is considered. According to \eqref{eq: system}, the dynamics are given as: $\dot{x}_1 = -(x_1-x_2)+v_1, \dot{x}_2 = -(x_2-x_1)-(x_2-x_3)+v_2$ and $\dot{x}_3 = -(x_3-x_2)+v_3$. The simulation parameters are set to $L_1 = \sqrt{2}$, $L_2 = 2$ and $\delta t = 0.1$. The initial agents' positions are set to $(-4,4), (0,6)$ and $(7,0)$ respectively. We consider Scenario 1 and Scenario 2 with $\lambda = 0.14$, $\bar{d}_{\max} = 0.25$ and $\lambda = 0.21$, $\bar{d}_{\max} = 0.20$, as is depicted in Figure \ref{fig:1a} and Figure \ref{fig:1b}, respectively. The cell decomposition presented in this paper and the reachable cells of each agent are depicted in Figure \ref{fig:1a}-\ref{fig:1b}. The reachable cells of each agent are depicted with purple, cyan and yellow respectively. In Figure \ref{fig:1a}-\ref{fig:1b} we can observe the evolution of the reachable sets of each agent at time $t = 10 \delta t$. It can be observed that the agents are not necessarily moving between neighboring cells and not all the individual runs satisfy the desired specification. The simulation statistics are depicted in Table \ref{table1}. The simulations were carried out in MATLAB Environment on a desktop with 8 cores, 3.60GHz CPU and 16GB of RAM.

\textbf{Simulation Example 2 :} Consider a multi-agent system with $x_i \in \mathbb{R}^2$, $i \in \mathcal{I} = \{1, 2, 3, 4\}$, $\mathcal{N}(1) = \{2\}, \mathcal{N}(2) = \{1,3\}$, $\mathcal{N}(3) = \{2, 4\}$, $\mathcal{N}(4) = \{3\}$. According to \eqref{eq: system}, the dynamics are given as: $\dot{x}_1 = -(x_1-x_2)+v_1, \dot{x}_2 = -(x_2-x_1)-(x_2-x_3)+v_2$, $\dot{x}_3 = -(x_3-x_2)-(x_2-x_4)+v_3$ and $\dot{x}_4 = -(x_4-x_3)+v_4$. The simulation parameters are set to $L_1 = \sqrt{2}$, $L_2 = 2$, $\delta t = 0.1$. The workspace is decomposed into square cells, which are depicted with blue color in Figure \ref{fig:simulation2}. The initial agents' positions are set to $(-4,4)$, $(0,6)$, $(7,0)$ and $(4,-5)$, respectively. We consider Scenario 1 and Scenario 2 with $\lambda = 0.14$, $\bar{d}_{\max} = 0.25$ and $\lambda = 0.21$, $\bar{d}_{\max} = 0.20$, as is depicted in Figure \ref{fig:1a} and Figure \ref{fig:1b}, respectively. The specification formulas for the Scenario $1$ are set to $\varphi_1 = \Diamond_{[0.2, 1.0]} \{\rm green\}$, $\varphi_2 = \Diamond_{[0.1, 1.0]} \{\rm orange\}$, $\varphi_3 = \Diamond_{[0.5, 1.5]} \{\rm black\}$ and $\varphi_4 = \Diamond_{[0.3, 2.3]} \{\rm gray\}$, respectively. The cell decomposition presented in this paper, the reachable cells of each agent up to time $t = 10 \delta t$ and the goal regions are depicted in Figure \ref{fig:simulation2}. The reachable cells of each agent are depicted with purple, cyan and yellow respectively. Note that the agents' transitions are not necessarily performed between neighboring cells. The individual consistent runs $\widetilde{r}_1^t$, $\widetilde{r}_2^t$, $\widetilde{r}_3^t$ and $\widetilde{r}_4^t$ of agents $1$, $2$, $3$ and $4$ that satisfy the formulas $\varphi_1$, $\varphi_2$, $\varphi_3$ and $\varphi_4$, respectively are depicted in Figure \ref{fig:simulation2} with black arrows. Each arrow represents a transition from a state to another according to Def. \ref{def: indiv_WTS}. The product WTS $\mathcal{T}_p$ has $52877$ reachable states in case of $\lambda = 0.14$ and $1255547$ when $\lambda = 0.21$ since, according to Section 4.2.2, larger values of $\lambda$ lead to finer discretization. Agent $1$ satisfies $\varphi_1$ in $5 \delta t$, agent $2$ satisfies $\varphi_2$ in $2 \delta t$, agent $3$ satisfies $\varphi_3$ in $8 \delta t$ and agent $4$ satisfies $\varphi_4$ in $10 \delta t$. The simulation is performed in a horizon of $10$ steps and it takes $668.80 \sec$ ($609.5 \sec$ for the abstraction and $59.30 \sec$ for the graph search) for Scenario 1.

\begin{figure}[t!]
	\centering
	\begin{subfigure}[b]{0.45\linewidth}
		\centering
		\includegraphics[scale = 0.5]{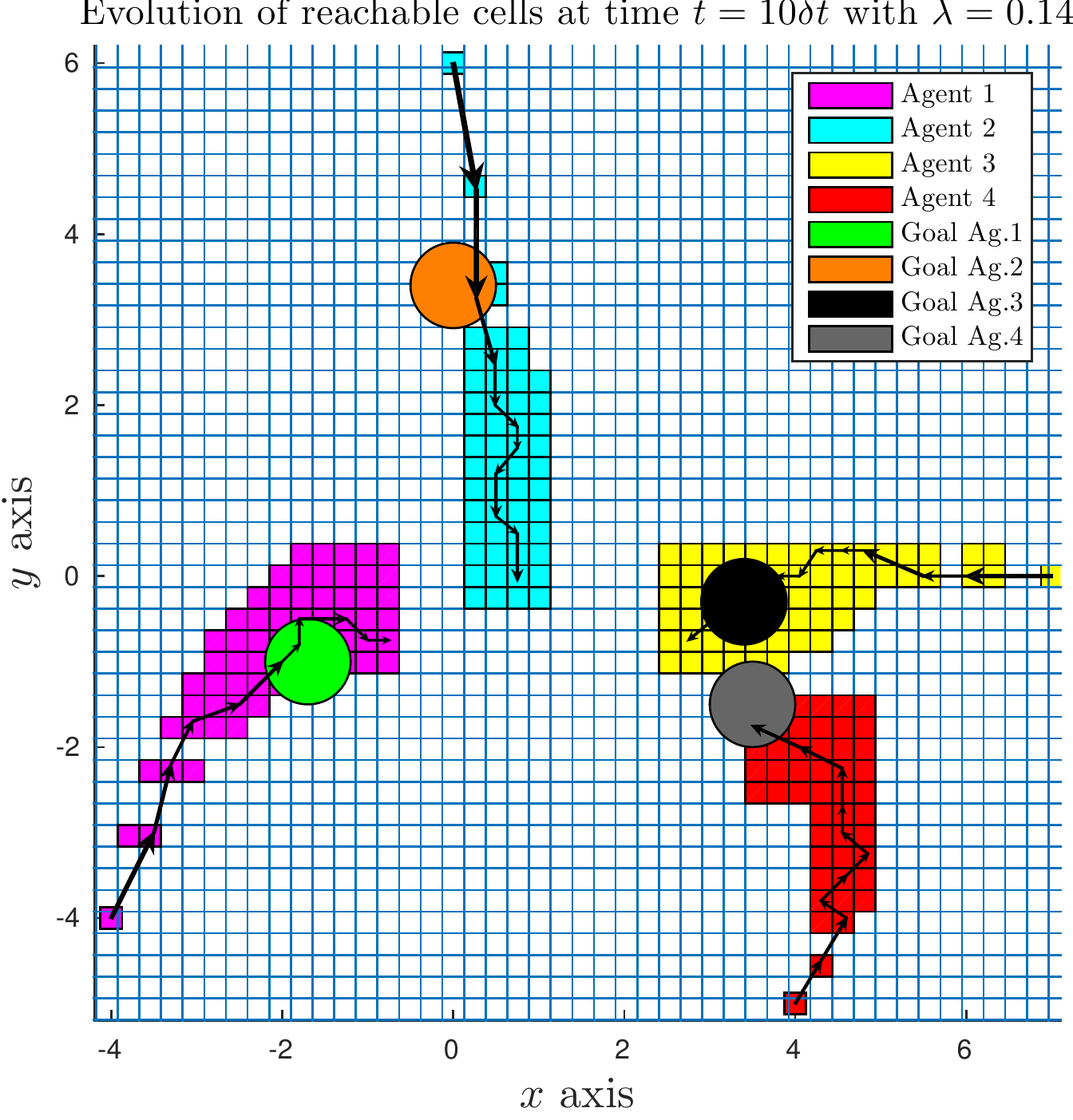}
		\caption{Scenario $1$ with $\lambda = 0.14$}\label{fig:2a}		
	\end{subfigure}
	\quad
	\begin{subfigure}[b]{0.45\linewidth}
		\centering
		\includegraphics[scale = 0.5]{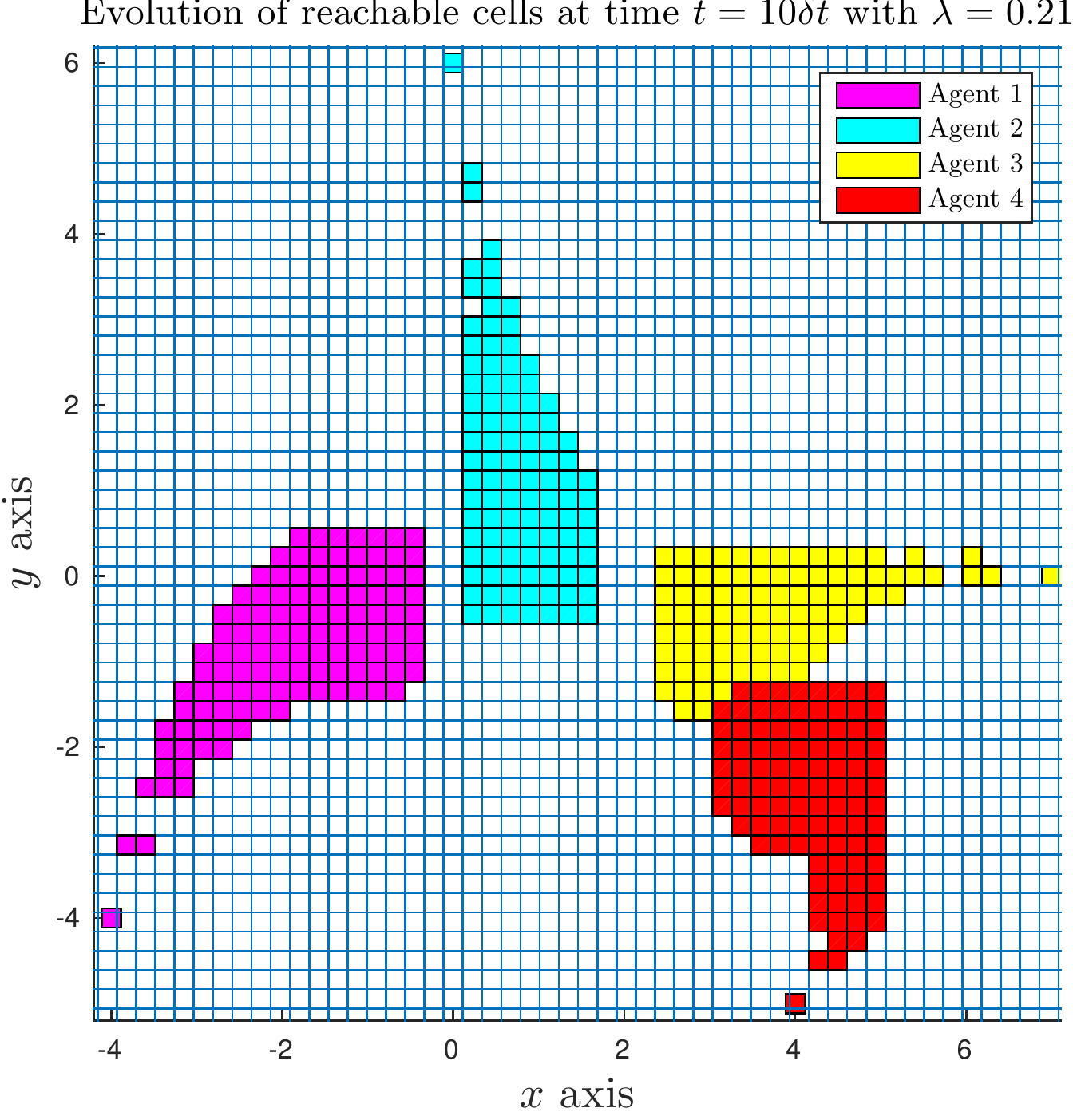}
		\caption{Scenario $2$ with $\lambda = 0.21$}\label{fig:2b}
	\end{subfigure}
	\caption{Two simulation scenarios with $N = 3$ agents and $\lambda = 0.14$, $\bar{d}_{\max} = 0.25$ (Figure \ref{fig:2a}) and $\lambda = 0.21$, $\bar{d}_{\max} = 0.20$ (Figure \ref{fig:2b})) in a time horizon of $t = 10 \delta t$. The figure depicts the evolution of the agents' reachable cells up to time $t = 10 \delta t$. The scenario is the same with the simulation of Section 6 of the revised paper, but now we have removed agent $4$. The rest of agents start from the same initial conditions as in the previous Scenarios.}\label{fig:simulation2}
\end{figure}

\begin{table}[t!]
	\begin{center}
		\begin{tabular}{|C{0.9cm}||C{2.3cm}|C{1.6cm}||C{2.3cm}|C{2.1cm}|}
			\hline
			\multicolumn{5}{|c|}{$N = 4$ agents}\\
			\hline
			\multicolumn{3}{|c||}{$\lambda = 0.14$} & \multicolumn{2}{c|}{$\lambda = 0.21$}\\
			\hline \hline
			Step  & Reachable States & Time &  Reachable States & Time  \\
			\hline \hline
			$\delta t$  & $8$ & $0.22 \sec$  & $24$ & $0.32 \sec$\\
			\hline
			$2 \delta t$  & $39$ & $0.11 \sec$ & $208$ & $0.24 \sec$\\
			\hline
			$3 \delta t$  & $216$ & $0.16 \sec$ & $6702$ & $5.92 \sec$\\
			\hline
			$4 \delta t$  & $1610$ & $17.97 \sec$ & $26843$ & $74.95 \sec$\\
			\hline
			$5 \delta t$  & $3168$ & $16.73 \sec$ &  $89817$ & $91.68 \sec$\\
			\hline
			$6 \delta t$  & $5346$ & $21.87 \sec$ &  $133904$ & $252.12 \sec$\\
			\hline
			$7 \delta t$  & $5808$ & $31.55 \sec$ & $222037$ & $451.61 \sec$\\
			\hline
			$8 \delta t$  & $10168$ & $37.39 \sec$ & $358941$ & $1427.65 \sec$\\
			\hline
			$9 \delta t$  & $23004$ & $93.98 \sec$ & $644489$ & $4803.12 \sec$\\
			\hline
			$10 \delta t$  & $52877$ & $389.50 \sec$ & $1255547$ & $19321.07 \sec$\\
			\hline \hline
			$ $ & \multicolumn{2}{|c||}{Total Time: $609.50 \sec$} & \multicolumn{2}{c|}{Total Time: $26429 \sec$}\\
			\hline
		\end{tabular}
		\end{center}
		\caption{This table shows the simulation statistics of an example with $N =4$ agents whose reachable cells are depicted in Figure \ref{fig:simulation2}. The number of reachable states of each time step is depicted. The total simulation time stands for the abstraction procedure.}
		\label{table2}
	\end{table}

\section{Conclusions and Future Work} \label{sec: conclusions}

A systematic method for controller synthesis of dynamically coupled multi-agent path-planning has been proposed, in which timed constraints of fulfilling a high-level specification are imposed to the system. The solution involves a boundedness analysis, the abstraction of each agent's motion into WTSs, TBAs as well as B\"uchi WTSs construction. The simulation example demonstrates our solution approach. Future work includes further computational improvement of the abstraction method and more complicated high-level tasks being imposed to the agents in order to exploit the expressiveness of MITL formulas.

\bibliographystyle{unsrt}        
\bibliography{references}           

\begin{thebibliography}{10}

\bibitem{muray_2010_receding}
T.~Wongpiromsarn, U.~Topcu, and R.~Murray.
\newblock {R}eceding {H}orizon {C}ontrol for {T}emporal {L}ogic
  {S}pecifications.
\newblock {\em 13th ACM International Conference on Hybrid Systems: Computation
  and Control (HSCC)}, pages 101--110, 2010.

\bibitem{guo_2015_reconfiguration}
M.~Guo and D.~Dimarogonas.
\newblock {M}ulti-{A}gent {P}lan {R}econfiguration {U}nder {L}ocal {LTL}
  {S}pecifications.
\newblock {\em The International Journal of Robotics Research}, 34(2):218--235,
  2015.

\bibitem{zavlanos_2016_multi-agent_LTL}
Y.~Kantaros and M.~Zavlanos.
\newblock {A} {D}istributed {LTL}-{B}ased {A}pproach for {I}ntermittent
  {C}ommunication in {M}obile {R}obot {N}etworks.
\newblock {\em American Control Conference (ACC)}, pages 5557--5562, 2016.

\bibitem{pappas_2016_implan}
I.~Saha, R.~Ramaithitima, V.~Kumar, G.~Pappas, and S.~Seshia.
\newblock {I}mplan: {S}calable {I}ncremental {M}otion {P}lanning for
  {M}ulti-{R}obot {S}ystems.
\newblock {\em 2016 ACM/IEEE 7th International Conference on Cyber-Physical
  Systems (ICCP)}, 2016.

\bibitem{liu_MTL}
J.~Liu and P.~Prabhakar.
\newblock {S}witching {C}ontrol of {D}ynamical {S}ystems from {M}etric
  {T}emporal {L}ogic {S}pecifications.
\newblock {\em IEEE International Conference on Robotics and Automation
  (ICRA)}, pages 5333--5338, 2014.

\bibitem{murray_2015_stl}
V.~Raman, A.~Donz{\'e}, D.~Sadigh, R.~Murray, and S.~Seshia.
\newblock {R}eactive {S}ynthesis from {S}ignal {T}emporal {L}ogic
  {S}pecifications.
\newblock {\em 18th International Conference on Hybrid Systems: Computation and
  Control (HSCC)}, pages 239--248, 2015.

\bibitem{topcu_2015}
J.~Fu and U.~Topcu.
\newblock {C}omputational {M}ethods for {S}tochastic {C}ontrol with {M}etric
  {I}nterval {T}emporal {L}ogic {S}pecifications.
\newblock {\em 54th IEEE Conference on Decision and Control (CDC)}, pages
  7440--7447, 2015.

\bibitem{baras_MTL_2016_new}
Y.~Zhou, D.~Maity, and John~S. Baras.
\newblock {T}imed {A}utomata {A}pproach for {M}otion {P}lanning {U}sing
  {M}etric {I}nterval {T}emporal {L}ogic.
\newblock {\em European Control Conference (ECC)}, 2016.

\bibitem{frazzoli_MTL}
S.~Karaman and E.~Frazzoli.
\newblock {V}ehicle {R}outing {P}roblem with {M}etric {T}emporal {L}ogic
  {S}pecifications.
\newblock {\em 47th IEEE Conference on Decision and Control (CDC 2008)}, pages
  3953--3958, 2008.

\bibitem{alex_2016_acc}
A.~Nikou, J.~Tumova, and D.~Dimarogonas.
\newblock {C}ooperative {T}ask {P}lanning of {M}ulti-{A}gent {S}ystems {U}nder
  {T}imed {T}emporal {S}pecifications.
\newblock {\em American Control Conference (ACC)}, pages 13--19, 2016.

\bibitem{alur_2000_discrete_abstractions}
R.~Alur, T.~Henzinger, G.~Lafferriere, and G.~Pappas.
\newblock {D}iscrete {A}bstractions of {H}ybrid {S}ystems.
\newblock {\em Proceedings of the IEEE}, 88(7):971--984, 2000.

\bibitem{zamani_2012_symbolic}
M.~Zamani, G.~Pola, M.~Mazo, and P.~Tabuada.
\newblock {S}ymbolic {M}odels for {N}onlinear {C}ontrol {S}ystems without
  {S}tability {A}ssumptions.
\newblock {\em IEEE Transactions on Automatic Control (TAC)}, 57(7), 2012.

\bibitem{abate_2014_finite_abstractions}
M.~Zamani, M.~Mazo, and A.~Abate.
\newblock {F}inite {A}bstractions of {N}etworked {C}ontrol {S}ystems.
\newblock {\em 53rd IEEE Conference on Decision and Control (CDC)}, pages
  95--100, 2014.

\bibitem{PJ_tac}
P.~Meyer, A.~Girard, and E.~Witrant.
\newblock {C}ompositional {A}bstraction and {S}afety {S}ynthesis {U}sing
  {O}verlapping {S}ymbolic {M}odels.
\newblock {\em IEEE Transaction on Automatic Control (TAC)}, 2017.

\bibitem{tabuada_compositional_abstractions}
O.~Hussein, A.~Ames, and P.~Tabuada.
\newblock {A}bstracting {P}artially {F}eedback {L}inearizable {S}ystems
  {C}ompositionally.
\newblock {\em IEEE Control Systems Letters, vol. 1, no. 2, pp. 227-232}, Oct.
  2017.

\bibitem{boskos_cdc_2015}
D.~Boskos and D.~Dimarogonas.
\newblock {D}ecentralized {A}bstractions {F}or {M}ulti-{A}gent {S}ystems
  {U}nder {C}oupled {C}onstraints.
\newblock {\em 54th IEEE Conference on Decision and Control (CDC)}, pages
  282--287, 2015.

\bibitem{horn1994topics}
R.~Horn and C.~Johnson.
\newblock {\em {T}opics in {M}atrix {A}nalysis}.
\newblock Cambridge University Press, 1994.

\bibitem{mesbahi2010graph}
M.~Mesbahi and M.~Egerstedt.
\newblock {G}raph {T}heoretic {M}ethods in {M}ultiagent {N}etworks.
\newblock 2010.

\bibitem{alur1994}
R.~Alur and D.L. Dill.
\newblock {A} {T}heory of {T}imed {A}utomata.
\newblock {\em Theoretical Computer Science}, 126(2):183--235, 1994.

\bibitem{pavithra_expressiveness}
D.~D. Souza and P.~Prabhakar.
\newblock {O}n the {E}xpressiveness of {MTL} in the {P}ointwise and
  {C}ontinuous {S}emantics.
\newblock {\em International Journal on Software Tools for Technology
  Transfer}, 9(1):1--4, 2007.

\bibitem{quaknine_decidability}
J.~Ouaknine and J.~Worrell.
\newblock {O}n the {D}ecidability of {M}etric {T}emporal {L}ogic.
\newblock {\em 20th Annual IEEE Symposium on Logic in Computer Science (LICS)},
  pages 188--197, 2005.

\bibitem{alur_mitl}
R.~Alur, T.~Feder, and T.~A. Henzinger.
\newblock {T}he {B}enefits of {R}elaxing {P}unctuality.
\newblock {\em Journal of the ACM (JACM)}, 43(1):116--146, 1996.

\bibitem{reynold}
M.~Reynolds.
\newblock {M}etric {T}emporal {L}ogics and {D}eterministic {T}imed {A}utomata.
\newblock 2010.

\bibitem{bouyer_phd}
P.~Bouyer.
\newblock {F}rom {Q}ualitative to {Q}uantitative {A}nalysis of {T}imed
  {S}ystems.
\newblock {\em M{\'e}moire d’habilitation, Universit{\'e} Paris}, 7:135--175,
  2009.

\bibitem{tripakis_tba}
S.~Tripakis.
\newblock {C}hecking {T}imed {B}uchi {A}utomata {E}mptiness on {S}imulation
  {G}raphs.
\newblock {\em ACM Transactions on Computational Logic (TOCL)}, 10(3), 2009.

\bibitem{maler_MITL_TA}
O.~Maler, D.~Nickovic, and A.~Pnueli.
\newblock {F}rom {MITL} to {T}imed {A}utomata.
\newblock {\em International Conference on Formal Modeling and Analysis of
  Timed Systems}, pages 274--289, 2006.

\bibitem{nickovic_timed_aut}
D.~Ni{\v{c}}kovi{\'c} and N.~Piterman.
\newblock {F}rom {MTL} to {D}eterministic {T}imed {A}utomata.
\newblock {\em Formal Modeling and Analysis of Timed Systems}, 2010.

\bibitem{MITL_2_TA_tool}
T.~Brihaye, G.~Geeraerts, H.~Ho, and B.~Monmege.
\newblock {M}ighty{L}: {A} {C}ompositional {T}ranslation from {MITL} to {T}imed
  {A}utomata.
\newblock {\em 29th International Conference on Computer Aided Verification
  (CAV)}, 2017.

\bibitem{frazzoli_vehicle_routing}
S.~Karaman and E.~Frazzoli.
\newblock {L}inear {T}emporal {L}ogic {V}ehicle {R}outing with {A}pplications
  to {M}ulti-{UAV} {M}ission {P}lanning.
\newblock {\em International Journal of Robust and Nonlinear Control},
  21(12):1372--1395, 2011.

\bibitem{belta_2010_product_system}
M.~Kloetzer and C.~Belta.
\newblock {A}utomatic {D}eployment of {D}istributed {T}eams of {R}obots {F}rom
  {T}emporal {M}otion {S}pecifications.
\newblock {\em IEEE Transactions on Robotics}, 26(1):48--61, 2010.

\bibitem{belta_cdc_reduced_communication}
M.~Kloetzer, X.~C. Ding, and C.~Belta.
\newblock {M}ulti-{R}obot {D}eployment from {LTL} {S}pecifications with
  {R}educed {C}ommunication.
\newblock {\em 50th IEEE Conference on Decision and Control (CDC)}, pages
  4867--4872, 2011.

\bibitem{tabuada_book_verification}
P.~Tabuada.
\newblock {\em {V}erification and {C}ontrol of {H}ybrid {S}ystems: a {S}ymbolic
  {A}pproach}.
\newblock Springer Science and Business Media, 2009.

\bibitem{katoen}
C.~Baier, J.P. Katoen, and K.~G. Larsen.
\newblock {\em Principles of Model Checking}.
\newblock MIT Press, 2008.

\bibitem{liberzon_switching}
D.~Liberzon.
\newblock {\em {S}witching in {S}ystems and {S}ontrol}.
\newblock Springer Science and Business Media, 2012.

\bibitem{boskos_cdc_connectivity}
D.~Boskos and D.~Dimarogonas.
\newblock {R}obust {C}onnectivity {A}nalysis for {M}ulti-{A}gent {S}ystems.
\newblock {\em 54th IEEE Conference on Decision and Control (CDC)}, pages
  6767--6772, 2015.

\bibitem{fiedler1973algebraic}
Miroslav M.~Fiedler.
\newblock {A}lgebraic {C}onnectivity of {G}raphs.
\newblock {\em Czechoslovak Mathematical Journal}, 23(2):298--305, 1973.

\end{thebibliography}

\appendix

\section{Proof of Theorem 1} \label{app:proof_theorem_1}

Consider the following candidate Lyapunov function $V: \mathbb{R}^{Nn} \rightarrow \mathbb{R}_{\ge 0}$
\begin{equation*}
V(x) = \frac{1}{2} \sum_{i=1}^{N} \sum_{j \in \mathcal{N}(i)}^{} \| x_i-x_j\|^2 = \|\widetilde{x}\|^2 > 0.
\end{equation*}
The time derivative of $V$ along the trajectories of \eqref{eq: system}, can be computed as
\begin{align}
&\hspace{-1mm}\dot{V} = \left[ \nabla V(x) \right]^{\top} \ \dot{x} \nonumber \\
&\hspace{-1mm}= \sum_{k=1}^{n} \left\{ c\left(\frac{\partial V}{\partial x} ,k\right)^{\top} \left[ -L(\mathcal{G}) \ c(x,k)+c(v, k) \right] \right\}, \label{lyap1}
\end{align}
where $\displaystyle c\left(\frac{\partial V}{\partial x} ,k \right)  = \begin{bmatrix} \frac{\partial V}{\partial x_1^k} & \dots & \frac{\partial V}{\partial x_N^k} \end{bmatrix}^{\top}$. By computing the partial derivative of the Lyapunov function with respect to vector $x_i, \ i \in \mathcal{I}$ we get $\displaystyle \frac{\partial V}{\partial x_i} = \sum_{j \in \mathcal{N}(i)}^{} (x_i-x_j), \ i \in \mathcal{I}$ from which we have that $\displaystyle c\left(\frac{\partial V}{\partial x} ,k\right)^{\top} = c(x,k)^{\top} \ L(\mathcal{G}), \ k = 1,...,n$. Thus, by substituting the last in \eqref{lyap1} we get
\begin{align}
\dot{V} &= -\sum_{k=1}^{n} \left\{ c(x,k)^{\top} \ \left[ L(\mathcal{G})\right]^2 \ c(x,k) \right\} + \sum_{k=1}^{n} \left\{ c(x,k)^{\top} \ \left[ L(\mathcal{G})\right]^2 c(v, k) \right\} \nonumber \\
&\leq -\sum_{k=1}^{n} \left\{ c(x,k)^{\top} \ \left[ L(\mathcal{G})\right]^2 \ c(x,k)) \right\} + \left\| \sum_{k=1}^{n} \left\{ c(x,k)^{\top} \ L(\mathcal{G}) \ c(v, k) \right\} \right\|. \label{eq: lyap3}
\end{align}

\noindent For the first term of \eqref{eq: lyap3} we have that
\begin{align}
\sum_{k=1}^{n} \left\{ c(x,k)^{\top} \ L(\mathcal{G})^2 \ c(x,k)) \right\} = \sum_{k=1}^{n}  \left\| L(\mathcal{G}) \ c(x,k) \right\|^2. \notag
\end{align}

\noindent For the second term of \eqref{eq: lyap3} we get
\begin{align}
\left\| \sum_{k=1}^{n} \left\{ c(x,k)^{\top} \ L(\mathcal{G}) \ c(\nu,k) \right\} \right\| & = \left\| \sum_{k=1}^{n} \left\{ c(x,k)^{\top} \ D(\mathcal{G}) \ D(\mathcal{G})^{\top} \ c(\nu,k) \right\} \right\|  \notag \\
& \leq \sum_{k=1}^{n} \left\{ \left\| D(\mathcal{G}) ^{\top} \ c(x,k) \right\| \ \left\| D(\mathcal{G})^{\top} \right\| \ \left\| c(\nu,k) \right\|\right\} \notag \\
& =  \left\| D(\mathcal{G})^{\top} \right\| \ \sum_{k=1}^{n} \left\{ \left\| c(\widetilde{x},k) \right\| \ \left\| c(\nu,k) \right\| \right\}. \label{eq: lyap4}
\end{align}
By using the Cauchy-Schwarz inequality in \eqref{eq: lyap4} we get	
\begin{align}
\left\| \sum_{k=1}^{n} \left\{ c(x,k)^{\top} \ L(\mathcal{G}) \ c(v, k) \right\} \right\| & \leq \left\| D(\mathcal{G})^{\top} \right\| \ \left( \sum_{k=1}^{n} \left\| c(\widetilde{x},k) \right\|^2 \right)^{\frac{1}{2}} \ \left( \sum_{k=1}^{n} \left\| c(v, k) \right\|^2 \right)^{\frac{1}{2}} \notag \\
& = \left\| D(\mathcal{G})^{\top} \right\| \|\widetilde{x}\| \|v\| \leq \left\| D(\mathcal{G})^{\top} \right\| \|\widetilde{x}\| \sqrt{N} \|v\|_\infty, \notag
\end{align}

\noindent where $\|v\|_\infty = \text{max} \left\{\|v_i\| : i \in \mathcal{I} \right\} \leq v_{\text{max}}$. Thus, by combining the previous inequalities, \eqref{eq: lyap3} is written
\begin{equation}
\dot{V} \leq -\sum_{k=1}^{n}  \left\{ \left\| L(\mathcal{G}) c(x,k) \right\|^2 \right\} + \sqrt{N} \left\| D(\mathcal{G})^{\top} \right\| \|\widetilde{x}\| v_{\text{max}}. \label{eq: lyap8}
\end{equation}
By exploiting Lemma \ref{lemma: lemma_1}, \eqref{eq: lyap8} is written as:
\begin{align}
\dot{V} &\leq -\lambda_2^2(\mathcal{G}) \sum_{k=1}^{n} \left\{\left\| c(x^{\perp},k) \right\|^2 \right\} + \sqrt{N} \left\| D(\mathcal{G})^{\top} \right\| \|\widetilde{x}\| v_{\text{max}} \notag \\
&= -\lambda_2^2(\mathcal{G}) \ \|x^\perp\|^2  + \sqrt{N} \left\| D(\mathcal{G})^{\top} \right\| \|\widetilde{x}\| v_{\text{max}} \notag \\
&\leq -\frac{\lambda_2^2(\mathcal{G})}{2(N-1)}\|\widetilde{x}\|^2  + \sqrt{N} \left\| D(\mathcal{G})^{\top} \right\| \|\widetilde{x}\| v_{\text{max}} \notag \\
&\leq -K_1 \|\widetilde{x}\| \left(\|\widetilde{x}\|-K_2 v_{\text{max}} \right). \label{eq:concl_1}
\end{align}	
where $K_1 = \frac{\lambda_2^2(\mathcal{G})}{2(N-1)} > 0$. By using the following implication $\widetilde{x} = D^{\top}(\mathcal{G})x \Rightarrow \|\widetilde{x}\| = \|D(\mathcal{G})^\top x\| \leq \|D(\mathcal{G})^\top\| \|x\|$, apparently, we have that $0 < V(x)  = \|\widetilde{x}\|^2 \leq \|D(\mathcal{G})^\top\|^2 \|x\|^2 \ \text{and} \ \dot{V}(x) < 0$ when $\|\widetilde{x}\| \geq \bar{R} > K_2 v_{\text{max}} $. Thus, there exists a finite time $T > 0$ such that the trajectory will enter the compact set $\mathcal{X} = \{x \in \mathbb{R}^{Nn} : \|\widetilde{x}\| \leq \bar{R}\}$ and remain there for all $t \geq T$ with $\bar{R} > K_2 v_{\text{max}}$. This can be extracted from the following. Let us define the compact set $$\Omega = \left\{ x \in \mathbb{R}^{Nn} : K_2 v_{\text{max}} < \bar{R} \leq \|\widetilde{x}\| \leq \bar{M} \right\},$$ where $\bar{M} = V(x(0)) = \|\widetilde{x}(0)\|^2$. Without loss of generality it is assumed that it holds $\bar{M} > \bar{R}$. Let us define the compact sets $\mathcal{S}_{1} = \left\{ x \in \mathbb{R}^{Nn} : \|\widetilde{x}\| \leq \bar{M} \right\}, \mathcal{S}_{2} = \left\{ x \in \mathbb{R}^{Nn} : \|\widetilde{x}\| \leq K_2 v_{\text{max}} \right\}$. From the equivalences $\forall \ x \in S_1 \Leftrightarrow V(x) = \|\widetilde{x}\|^2 \leq \bar{M}^2, \forall \ x \in S_2 \Leftrightarrow V(x) = \|x\|^2 \leq K^2_2 v^2_{\text{max}}$, we have that the boundaries $\partial S_1, \partial S_2$ of sets $S_1, S_2$ respectively, are two level sets of the Lyapunov function $V$. By taking the above into consideration we have that $\partial S_2 \subsetneq \partial S_1$. Hence, we get from \eqref{eq:concl_1} that: 
\begin{equation}
\dot{V}(x) < 0, \forall \ x \in \Omega = S_1 \backslash S_2, \label{eq:concl_2}
\end{equation}			
In view of \eqref{eq:concl_2} and the fact that the sets $\mathcal{S}_1, \mathcal{S}_2$ are defined in terms of level sets of $V$, we conclude that both $S_1$ and $S_2$ are invariant with respect to the system \eqref{eq: system}. Consequently, according to \cite[Lemma 5.1]{liberzon_switching} the trajectory that starts inside the set $S_1$ has to enter the interior of the set of $S_2$ in finite time $T > 0$ and remain there for all time $t \geq T$.

\section{Proof of Lemma 2}

\begin{lemma} \label{lemma: lemma_1}
	Let $x^\perp$ be the projection of the vector $x \in \mathbb{R}^{Nn}$ to the orthogonal complement of the subspace $H = \{x \in \mathbb{R}^{Nn}: x_1 = \ldots = x_N\}$. Then, the following properties hold:
	\begin{align*}
	\| L(\mathcal{G}) \ c(x, k) \| &\geq \lambda_2 (\mathcal{G}) \ \|c(x^{\perp}, k) \|, \ \forall \ k \in \mathcal{I}, \\
	\|x^\perp\| &\geq \frac{1}{\sqrt{2(N-1)}} \|\widetilde{x}\|.
	\end{align*}
\end{lemma}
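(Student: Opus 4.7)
The plan is to reduce both inequalities to elementary spectral and combinatorial facts about $L(\mathcal{G})$, after first observing that the projection $x\mapsto x^{\perp}$ interacts cleanly with the component operator $c(\cdot,k)$. Specifically, since $H$ consists precisely of those $x\in\mathbb{R}^{Nn}$ for which $c(x,k)\in\mathrm{span}(\mathbf{1}_{N})$ for every $k$, the decomposition $x=x^{\parallel}+x^{\perp}$ amounts to the component-wise decomposition $c(x,k)=c(x^{\parallel},k)+c(x^{\perp},k)$ with $c(x^{\parallel},k)\in\mathrm{span}(\mathbf{1}_{N})$ and $c(x^{\perp},k)\in\mathbf{1}_{N}^{\perp}$ for each $k\in\{1,\ldots,n\}$. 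This identification is what drives both arguments.

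For the first inequality, I would exploit $L(\mathcal{G})\mathbf{1}_{N}=0$ to write $L(\mathcal{G})c(x,k)=L(\mathcal{G})c(x^{\perp},k)$. Since $L(\mathcal{G})$ is symmetric positive semidefinite with kernel $\mathrm{span}(\mathbf{1}_{N})$ (using connectedness), the Courant--Fischer characterization applied to $L(\mathcal{G})^{2}$ restricted to $\mathbf{1}_{N}^{\perp}$ yields $y^{\top}L(\mathcal{G})^{2}y\geq\lambda_{2}(\mathcal{G})^{2}\|y\|^{2}$ for every $y\perp\mathbf{1}_{N}$. Taking $y=c(x^{\perp},k)$ and extracting square roots gives the desired inequality.

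For the second inequality, I would work directly at the level of edge-differences. On every edge $\{i,j\}\in\mathcal{E}$ the consensus components cancel, so $x_{i}-x_{j}=x_{i}^{\perp}-x_{j}^{\perp}$, and therefore
\begin{equation*}
\|\widetilde{x}\|^{2}=\sum_{\{i,j\}\in\mathcal{E}}\|x_{i}^{\perp}-x_{j}^{\perp}\|^{2}\leq 2\sum_{\{i,j\}\in\mathcal{E}}\bigl(\|x_{i}^{\perp}\|^{2}+\|x_{j}^{\perp}\|^{2}\bigr)=2\sum_{i=1}^{N}d_{i}\,\|x_{i}^{\perp}\|^{2},
\end{equation*}
where $d_{i}$ is the degree of node $i$. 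Bounding $d_{i}\leq N-1$ and using $\sum_{i}\|x_{i}^{\perp}\|^{2}=\|x^{\perp}\|^{2}$ produces $\|\widetilde{x}\|^{2}\leq 2(N-1)\|x^{\perp}\|^{2}$, which is exactly the claimed bound.

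I do not foresee a substantive obstacle. The only moving parts are the component-wise identification of the projections, the Courant--Fischer estimate for the first inequality, and the elementary bound $\|a-b\|^{2}\leq 2(\|a\|^{2}+\|b\|^{2})$ combined with $d_{i}\leq N-1$ for the second. Being explicit about the $c(x^{\perp},k)=c(x,k)^{\perp}$ identification at the outset is what makes both arguments fall out cleanly.
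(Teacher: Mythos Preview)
Your proposal is correct and complete. The paper does not actually prove this lemma; it merely cites \cite[Appendix~A]{boskos_cdc_connectivity}, so there is no in-paper argument to compare against. Your component-wise identification $c(x^{\perp},k)=c(x,k)^{\perp}$, the Courant--Fischer bound on $\mathbf{1}_{N}^{\perp}$ for the first inequality, and the edge-wise estimate using $\|a-b\|^{2}\le 2(\|a\|^{2}+\|b\|^{2})$ together with $d_{i}\le N-1$ for the second, constitute a clean and self-contained proof that the paper itself omits.
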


\begin{proof}
	The proof can be found in \cite[Appendix A]{boskos_cdc_connectivity}.
\end{proof}

\section{Sufficient Conditions for Well-Possessedness of the Abstraction} \label{app:suff_conditions}

We investigate here if the system \eqref{eq: system} satisfies the sufficient conditions \textbf{(C1)}-\textbf{(C3)} for well-posed abstractions.

\noindent \textbf{(C1)} For every $i \in \mathcal{I}, \forall \ x \in \mathbb{R}^{Nn} : \widetilde{x} \in \mathcal{X}$ and $\text{pr}_i(x) = (x_i, \mathbf{x}_j)$ it holds that:
\begin{align*}
\| f_i(x_i, \mathbf{x}_j) \| &= \left\| -\sum_{j \in \mathcal{N}(i)}^{}(x_i-x_j) \right\|  \leq \sum_{j \in \mathcal{N}(i)}^{} \| x_i-x_j \| \notag \\ 
&\leq  \sum_{ (i,j) \in \mathcal{E} }^{} \| x_i-x_j \| = \Delta x \leq \bar{R}.
\end{align*}
Thus, $M = \bar{R}$. We have also that $\|D(\mathcal{G})^\top\| = \sqrt{\lambda_{\text{max}} ( D(\mathcal{G}) D(\mathcal{G})^\top)} = \sqrt{\lambda_{\text{max}}(\mathcal{G})}$ and $\lambda_2(\mathcal{G}) \leq \frac{N}{N-1} \min \{N_i : i \in \mathcal{I} \}$ from \cite{fiedler1973algebraic}.  For $N > 2$ it holds that $\lambda_2(\mathcal{G}) < N$. From Theorem \eqref{theorem: theorem_1} we have that $\bar{R} > K_2 v_{\text{max}} \Leftrightarrow M > K_2 v_{\text{max}}$. It holds that $M > v_{\text{max}}$ since

\begin{align}
K_2 &= \frac{2 \sqrt{N} (N-1) \left\| D(\mathcal{G})^{\top} \right\|}{\lambda_2^2(\mathcal{G})} = \frac{2 \sqrt{N} (N-1) \sqrt{\lambda_{\text{max}}(\mathcal{G})}}{\sqrt{\lambda_2^3(\mathcal{G})} \sqrt{\lambda_2(\mathcal{G})}} \notag \\
&\geq \frac{2 \sqrt{N} (N-1)}{\sqrt{N^3}} \sqrt{\frac{\lambda_{\text{max}}(\mathcal{G})}{\lambda_2(\mathcal{G})}} \geq \frac{2 \sqrt{N} (N-1)}{\sqrt{N^3}} > 1. \notag
\end{align}

\noindent \textbf{(C2)} Starting from the left hand side of \eqref{eq:lipsitch_1} we get:
\begin{align}
\| f_i(x_i, \mathbf{x}_j) - f_i(x_i, \mathbf{y}_j) \| & = \Big\| -\sum_{j \in \mathcal{N}(i)}^{}(x_i-x_j) + \sum_{j \in \mathcal{N}(i)}^{}(x_i-y_j) \Big\| \notag \\
& \leq  \text{max}\{\sqrt{N_i} : i \in \mathcal{I}\} \ \|(x_i, \mathbf{x}_j) - (x_i, \mathbf{y}_j) \|. \notag
\end{align}
Thus, the condition \textbf{(C2)} holds and the Lipschitz constant is $L_1 = \text{max}\{\sqrt{N_i} : i \in \mathcal{I}\} >0$, where the inequality $\displaystyle \left( \sum_{i = 1}^{\rho} \alpha_i \right)^2 \leq \rho \ \left( \sum_{i=1}^\rho \alpha_i^2 \right)$ is used. \\

\noindent \textbf{(C3)} By using the same methodology as in \textbf{(C2)}, we conclude that $L_2 = \text{max} \{N_i : i \in \mathcal{I}\} > 0$.

\end{document}